\newtheorem{theorem}{\sc Theorem}
\newtheorem*{theorem*}{\sc Theorem}
\newtheorem{lemma}{\sc Lemma}
\newtheorem{proposition}{\sc Proposition}
\newtheorem{definition}{\sc Definition}
\newtheorem*{axiom:iia}{\sc Independence of Irrelevant Alternatives (IIA)}
\newtheorem*{axiom:ie}{\sc Ignorance Equivalence (IE)}
\newcommand{\iia}{\hyperref[axiom:iia]{IIA}\xspace}
\newcommand{\ie}{\hyperref[axiom:ie]{IE}\xspace}
\DeclareMathOperator{\aff}{aff}
\DeclareMathOperator{\dom}{dom}
\DeclareMathOperator{\cav}{cav}
\DeclareMathOperator{\ri}{ri}
\DeclareMathOperator{\epi}{epi}
\colorlet{domainblue}{blue!50}
\title{\bf{Posterior-Separable Costs and Menu Preferences}%
\footnote{This paper partially subsumes results from 
\cite{de2014axiomatic}, in which our two main axioms originally
appear along with a partial characterization of the information
costs that are consistent with them. We would like to thank 
Tommaso Denti for his helpful comments and suggestions, as well as 
seminar participants at SAET 2025. We also thank Victor Luccas Ramalho Moura for his excellent research assistance.}}
\author{Henrique de Oliveira%
\footnote{\texttt{henrique.oliveira@fgv.br, https://henriquedeoliveira.com}}\\ \centering \it \small
FGV - S\~{a}o Paulo School of Economics
\and Jeffrey Mensch%
\footnote{\texttt{jeffrey.mensch@mail.huji.ac.il, https://sites.google.com/site/jeffreyimensch/.}} \\ \centering \it \small Hebrew University of Jerusalem}
\date{\today}
\begin{document}

\maketitle
\begin{abstract}

\noindent 

We consider an agent with a rationally inattentive preference over menus of acts, as in \cite{de2017rationally}. We show that two axioms, \emph{Independence of Irrelevant Alternatives} and \emph{Ignorance Equivalence}, are necessary and sufficient for this agent to have a posterior-separable cost satisfying a mild smoothness condition, called joint-directional differentiability. Viewing the decision-maker's problem as a Bayesian persuasion problem, we also show that these axioms are necessary and sufficient for solvability by a unique hyperplane. When the cost function remains invariant for different priors, we show that these axioms
imply uniformly posterior separable costs that are differentiable. 

\end{abstract}
\section{Introduction}

In many economic settings, a decision maker (DM) must acquire information
before making a choice. This information may be costly, as it is difficult
to acquire more precise information, and there may be flexibility in tailoring its content to the context of the problem. To model this behavior, the rational inattention literature has, to a large extent, focused on 
\emph{posterior separable} costs of information, where the cost of an
experiment $\pi$ is given by 
\[
c(\pi)=\int \psi(p)\pi(d p).
\]

Posterior-separable costs allow the use of information design
tools to solve the rational inattention problem. To understand how this works, we first represent the indirect utility that the DM gets from a menu as a convex function $\phi$ of their belief about the state of the world, $p$. Then, the rational inattention problem can be written as
\[
\max_{\pi \in \Pi(p_0)}\int [\phi(p)-\psi(p)]\pi(d p),
\]
where $\Pi(p_0)$ is the set of distributions over posteriors consistent with the prior belief $p_0$. This problem is analogous to an information design problem, and its value function can be written as the concavification of the function $\phi-\psi$ evaluated at $p_0$ \citep{Aumann1995,Kamenica2011}. Another way to think of this concavification is by looking at the lowest value $A(p_0)$ among all affine functions $A$ satisfying $A\geq \phi-\psi$ or, geometrically, as the lowest hyperplane above the graph of $\phi-\psi$, which we call an ``optimal hyperplane" for the menu. A solution $\pi$ to the rational inattention problem must then have its support within the set of posterior beliefs such that $A(p)=\phi(p)-\psi(p)$.

In this paper, we show an axiomatic characterization of menu preferences coming from posterior-separable costs that is closely connected to the information-design formulation of the rational inattention problem. We build on the general cost characterization from \cite{de2017rationally}, adding two axioms. The first axiom, \emph{Independence
of Irrelevant Alternatives}, states that if the DM is indifferent 
between two menus $F,G$ as well as their intersection, 
then the
DM is also indifferent to their union. The idea is that the
preferences indicate that the options outside of the intersection
are not useful, and so the DM does not benefit from the added 
flexibility. Thus, they remain without benefit when considering
the union.

The second axiom, \emph{Ignorance Equivalence}\footnote{This axiom was called ``Linearity" in \cite{de2014axiomatic}.}, states that, for each menu,
there is an act that the DM is indifferent to, including under
union with the menu. This act provides an ``ignorance equivalent"
\citep{muller2023rational} to the menu. When this act is added to the menu, the agent is indifferent between acquiring information optimally for the menu or not acquiring any information and simply choosing the act. Thus, the extra flexibility afforded by adding the act does not alter the menu's value. As \cite{muller2023rational} explain, the ignorance equivalent 
can be viewed as an analogue of a certainty equivalent for menus.

Our characterization shows that these two axioms are equivalent to the existence of a posterior separable cost satisfying one extra property: for every menu, there is a unique optimal hyperplane. In principle, although two lowest hyperplanes must intersect at $p_0$, they could have different slopes. Thus, some posterior-separable costs do not satisfy this condition. We show that, to satisfy this unique hyperplane property, $\psi$ must satisfy a weak form of smoothness, which we call ``joint-directional differentiability". The set of such $\psi$ is dense  and contains the set of differentiable $\psi$, which includes all commonly used cost functions, such as entropy \citep{sims2003implications, de2014axiomatic, matvejka2015rational, caplin2022rationally}, log-%
likelihood \citep{pomatto2023cost}, residual variance \citep{ely2015suspense}, and neighborhood-based costs \citep{hebert2021neighborhood}. As we show 
in Section 6, when considering uniformly posterior-separable costs, 
our two axioms yield a differentiable cost representation.
\section{Model}
Let $\Omega$ denote a finite set of states of the world and let $X$ be a mixture space of consequences\footnote{For example, $X$ could be the set of lotteries
over a fixed set of prizes, or it could be a convex subset of some vector space.}. An \emph{act }is a function $f:\Omega\rightarrow X$. A finite set of
acts will be called a \emph{menu }and denoted by $F,G,H$ etc. The
set of all acts is denoted by $\mathcal{F}$ and the set of all menus by $\mathbb{F}$. A single act $f$ can also be seen as a singleton menu $\{f\}$; we usually omit the brackets if there is no chance for confusion. 

Mixtures of acts are defined pointwise:
given two acts $f,g$ and a scalar $\alpha\in\left[0,1\right]$, denote
by $\alpha f+\left(1-\alpha\right)g$ the act that in each state $\omega$
delivers the outcome $\alpha f\left(\omega\right)+\left(1-\alpha\right)g\left(\omega\right)$.
For $\alpha\in\left[0,1\right]$, the mixture of two menus is defined as
\[
\alpha F+\left(1-\alpha\right)G=\left\{ \alpha f+\left(1-\alpha\right)g:\:f\in F,g\in G\right\} .
\]
We can interpret $\alpha F+\left(1-\alpha\right)G$ as a lottery over
what menu the agent faces.

Given an arbitrary set $Z$ we let $\Delta(Z)$ denote the set of probability distributions over $Z$ with finite support. 

\subsection{Rationally inattentive preferences}

The primitive is a preference $\succsim$ defined over menus, which is interpreted according to the following
timeline. 

\begin{center}
\begin{tikzpicture}
\draw[->] (-1,0) -- (10,0);
\draw (0,0) -- (0,0.5);
\draw (3,0) -- (3,0.5);
\draw (6,0) -- (6,0.5);
\draw (9,0) -- (9,0.5);
\node at (0,1) {choose};
\node at (0,0.7) {menu};
\node at (3,1) {allocate};
\node at (3,0.7) {attention};
\node at (6,1) {observe};
\node at (6,0.7) {signal};
\node at (9,1) {choose};
\node at (9,0.7) {act};
\end{tikzpicture}
\end{center}
\begin{sloppypar}
Thus, the agent chooses among menus while aware that they will be able to obtain information before finally choosing an act. We consider an agent who is rationally inattentive, that is, whose preferences can be represented by
\[
V(\phi_F)=\max_{\pi \in \Pi(p_0)}\int \phi_F(p) \pi(dp)-c(\pi),
\]
where
\[
\Pi(p_0)=\left\{\pi\in \Delta(\Delta(\Omega))\Big|  \int p(\omega)\pi(dp)=p_0(\omega)\,\forall \omega\in \Omega\right\}
\]
represents distributions over posterior beliefs consistent with possible finite information structures, ${c:\Pi(p_0)\rightarrow \mathbb{R}\cup \{\infty\}}$ is the cost of information\footnote{The value of $\infty$ is assigned to those distributions over posteriors that should never be acquired, representing impossible information. This could alternatively be modeled as a restriction on the domain of the cost function.}, and
\[
\phi_F(p)=\max_{f\in F}\sum_{\omega} u(f(\omega))p(\omega),
\]
 where $u:X\rightarrow \mathbb{R}$ is the utility function over consequences. We assume that the image of $u$ is $\mathbb{R}$. 
\end{sloppypar}

 Note that, in the representation above, the consequence associated with an act in a given state only matters insofar as it affects utility. Thus, we may consider, as shorthand, instead of acts $f:\Omega\rightarrow X$, \emph{utility acts} given by $u\circ f:\Omega\rightarrow \mathbb{R}$. We may also work with \emph{utility menus}---finite sets of utility acts. For instance, when we refer to the menu $F=\{0\}$, we mean a utility menu that contains a single utility act $0\in \mathbb{R}^\Omega$ or, equivalently, any menu that has a single act giving utility zero in every state. 

The following result is proved in \cite{de2017rationally}:

\begin{proposition} [\cite{de2017rationally}, Theorems 1 \& 2]\label{prop:de2017rationally}
    Let $\succsim$ be a rationally inattentive preference. Then $\succsim$ has a representation $(u,p_0,c)$ where the cost function $c:\Pi(p_0)\rightarrow \mathbb{R}\cup \{\infty\}$ is \emph{canonical}, i.e. it satisfies
    \begin{description}
    \item[Groundedness:] $c(\delta_{p_0})=0$,
    \item[Convexity:] $c$ is a convex function,
    \item[Lower-semicontinuity:] $c$ is lower-semicontinuous in the weak$^*$ topology over $\Pi(p_0)$, and
    \item[Blackwell Monotonicity:] $c$ is increasing in the Blackwell order.
\end{description}
Moreover, fixing $u$, this cost function is unique and can be recovered from the functional $V$ by the formula
\[
c(\pi)=\sup_{F\in\mathbb{F}} \int \phi_F(p) \pi(dp) - V(\phi_F).
\]  
\end{proposition}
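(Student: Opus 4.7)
This statement essentially reproduces Theorems 1 and 2 of \cite{de2017rationally}, so the natural plan is to mirror the argument there. The strategy has three steps: extract $(u,p_0)$ from the restriction of $\succsim$ to singletons, construct $c$ as a conjugate of the value functional, and verify both canonicality and the dual recovery of $V$.

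First, I would restrict $\succsim$ to singleton menus, where no information acquisition can possibly be useful. On this subdomain, the Anscombe-Aumann-style axioms of \cite{de2017rationally} yield a non-constant affine utility $u:X\to\mathbb{R}$ (with image all of $\mathbb{R}$ by the richness assumption on $X$) and a unique prior $p_0\in\Delta(\Omega)$ such that $V(\phi_{\{f\}})=\sum_\omega u(f(\omega))p_0(\omega)=\phi_{\{f\}}(p_0)$. Using the identity $\phi_{\alpha F+(1-\alpha)G}=\alpha\phi_F+(1-\alpha)\phi_G$ (which follows from affineness of $u$), the aversion-to-randomization axiom then delivers convexity of $V$ on the cone $\{\phi_F:F\in\mathbb{F}\}$; monotonicity and Lipschitz continuity follow from the remaining axioms.

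Next, I would define $c$ directly by the one-sided conjugate formula
\[
c(\pi)\coloneqq \sup_{F\in\mathbb{F}}\left[\int\phi_F(p)\,\pi(dp)-V(\phi_F)\right]
\]
for $\pi\in\Pi(p_0)$, and $c(\pi)=+\infty$ otherwise. Two of the three canonical properties follow formally. Convexity of $c$ in $\pi$ is immediate as a supremum of functions affine in $\pi$. Blackwell-monotonicity holds because each $\phi_F$ is convex in $p$, so $\pi\mapsto\int\phi_F\,d\pi$ is monotone in the Blackwell order, and this is preserved by the supremum. Groundedness at $\pi=\delta_{p_0}$ follows because $\int\phi_F\,d\delta_{p_0}-V(\phi_F)=\phi_F(p_0)-V(\phi_F)\le 0$ for every $F$ (the value of no information cannot exceed the optimal value), with equality attained on any singleton menu.

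The remaining and most delicate step is the duality
\[
V(\phi_F)=\max_{\pi\in\Pi(p_0)}\int\phi_F(p)\,\pi(dp)-c(\pi).
\]
The $\ge$ direction is immediate from the definition of $c$. The reverse inequality is a biconjugate statement that requires, for each $\phi_F$, exhibiting a finite-support $\pi^\star\in\Pi(p_0)$ attaining equality in the definition of $c(\pi^\star)$ at the very menu $F$. I expect this to be the main obstacle: the domain of $V$ is the nonlinear cone of piecewise-linear convex functions on $\Delta(\Omega)$, so the Fenchel biconjugate theorem does not apply off the shelf. The fix, as in \cite{de2017rationally}, is to extend $V$ by continuity and convexity to the ambient space of continuous convex functions on $\Delta(\Omega)$, apply the standard biconjugate theorem there (justified by convexity and continuity of $V$), and then verify that the optimizing measures can always be chosen with finite support---consistent with the finite information structures built into the model.
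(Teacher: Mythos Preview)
The paper does not actually prove this proposition: it is stated with the attribution ``\cite{de2017rationally}, Theorems 1 \& 2'' and preceded by the sentence ``The following result is proved in \cite{de2017rationally},'' with no argument given in the body or the appendix. Your sketch is therefore not competing against any proof in the paper; it is a reasonable reconstruction of the conjugate-duality argument from the cited source, and the three-step plan (extract $(u,p_0)$ from singletons, define $c$ as a one-sided conjugate, recover $V$ via a biconjugate step) is the standard route. Your identification of the biconjugate step as the delicate point---requiring an extension of $V$ beyond piecewise-linear convex functions and a finite-support selection of the optimizing $\pi$---is accurate, and is indeed where the substantive work in \cite{de2017rationally} lies.
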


The following is a useful characterization of canonical costs (slightly modified from \cite{denti2021experimental}, Lemma 6):
\begin{proposition}\label{prop:dmr2021}
    A cost function $c:\Pi(p_0)\rightarrow \mathbb{R}\cup \{\infty\}$ is canonical if and only if it can be written as 
    \begin{equation}
\label{gencostrep}
    c(\pi)=\sup_{\psi\in\Psi} \int \psi \pi(dp)
\end{equation}
where $\Psi$ is a set of convex functions $\psi:\Delta(\Omega)\rightarrow \mathbb{R}\cup \{\infty\}$, such that
\begin{enumerate}
    \item $\max_{\psi\in \Psi} \psi(p_0)=0$, and
    \item $\Psi$ is \emph{minimal}\,---\,there is no 
    $\hat{\Psi}\subset \Psi$ such that $\sup_{\psi\in\Psi} \int \psi d\pi=\sup_{\psi\in\hat{\Psi}} \int \psi d\pi$ and 
    $\vert \hat{\Psi}\vert <\vert \Psi\vert$.\footnote{We define $|\Psi|\in \mathbb{N}\cup \{\infty\}$ as the cardinality of the set $\Psi$, without making distinction between countable and uncountable infinities.}
\end{enumerate}
\end{proposition}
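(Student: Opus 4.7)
The plan is to treat the two directions of the equivalence separately. For the sufficient direction, assume $c$ is given by \eqref{gencostrep} with $\Psi$ satisfying conditions~1 and~2, and verify the three defining properties of a canonical cost. Groundedness follows by plugging $\pi=\delta_{p_0}$ into the representation and invoking condition~1. Convexity of $c$ holds because it is a pointwise supremum of the affine functionals $\pi\mapsto\int\psi\,d\pi$. Blackwell-monotonicity follows because, for each convex $\psi$, Jensen's inequality shows that $\pi\mapsto\int\psi\,d\pi$ is monotone in the convex (i.e., Blackwell) order on $\Delta(\Delta(\Omega))$, and monotonicity is preserved under suprema. Condition~2 plays no role in this direction.

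For the necessary direction, suppose $c$ is canonical. I would apply Proposition~\ref{prop:de2017rationally} to the rationally inattentive preference induced by $(u,p_0,c)$ to write
\[
c(\pi)=\sup_{F\in\mathbb{F}}\int\bigl(\phi_F(p)-V(\phi_F)\bigr)\pi(dp).
\]
Setting $\psi_F(p):=\phi_F(p)-V(\phi_F)$, each $\psi_F$ is convex because $\phi_F$ is a finite maximum of affine functions of $p$, so $\Psi_0:=\{\psi_F:F\in\mathbb{F}\}$ is a candidate representing set of convex functions. For condition~1, since $\pi=\delta_{p_0}$ is always feasible at zero cost, $V(\phi_F)\ge\phi_F(p_0)$, and thus $\psi_F(p_0)\le 0$ for every $F$. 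Taking $F=\{0\}$, a singleton constant utility act, gives $\phi_F\equiv 0$ and $V(\phi_F)=0$, so $\psi_F\equiv 0$ and the maximum is attained at zero.

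For condition~2, the plan is to use well-ordering of cardinals: among all sets of convex functions representing $c$, let $\kappa^*$ be the minimum cardinality and pick any witnessing $\Psi^*$ with $|\Psi^*|=\kappa^*$. By construction, no proper subset of $\Psi^*$ of strictly smaller cardinality can represent $c$, so $\Psi^*$ satisfies condition~2. When $\kappa^*$ is finite, $\sup_{\psi\in\Psi^*}\psi(p_0)=c(\delta_{p_0})=0$ is a maximum over finitely many reals and hence attained, so condition~1 is automatic. When $\kappa^*$ is infinite, attainment may fail, in which case I would adjoin the identically-zero function; this leaves the representation unchanged because $c\ge 0$ on its domain (by groundedness together with Blackwell-monotonicity relative to $\delta_{p_0}$), and it does not alter the cardinality.

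The main obstacle I anticipate is precisely this coordination between conditions~1 and~2: ensuring that a minimum-cardinality representation can simultaneously have its maximum at $p_0$ attained. The argument above resolves it by splitting into the finite and infinite cases, the key facts being that $c\ge 0$ (so the zero function is always a legitimate addition) and that finite suprema are always attained. Every other step is a direct appeal to Proposition~\ref{prop:de2017rationally} or to the standard duality between convexity of integrands and Blackwell-monotonicity of their integrals.
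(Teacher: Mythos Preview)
Your proposal is correct and follows essentially the same route as the paper's proof: build $\Psi$ from the functions $\phi_F - V(\phi_F)$ via Proposition~\ref{prop:de2017rationally}, then split into the finite-cardinality case (where minimality forces attainment of the maximum at $p_0$) and the infinite case (where one adjoins the zero function, using $c \ge 0$, without changing cardinality). You additionally supply the straightforward sufficiency direction, which the paper's appendix proof leaves implicit.
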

\begin{proof}
See \Cref{app:proof_dmr2021}.    
\end{proof}

The main result in \cite{de2017rationally} is an axiomatic characterization of rationally inattentive preferences. The following properties will be useful for us:

\begin{enumerate}
    \item If $F\subseteq G$ then $V(\phi_F)\leq V(\phi_G)$ (``preference for flexibility");
    \item For any menu $F$ and act $h$, 
    \[
    V(\phi_F+\phi_h)=V(\phi_F)+\phi_h(p_0).
    \]
\end{enumerate}

\subsection{Posterior-separable cost}

Our goal in this paper is to understand a more specific class of costs of information, which we introduce now. 

\begin{definition}
    The cost of information $c:\Pi(p_0)\rightarrow \mathbb{R}\cup \{\infty\}$ is said to be \emph{posterior separable} if there exists a measurable, bounded from below,%
    \footnote{Notably, $\psi$ can take on values of $+\infty$.}
    and lower-semicontinuous function $\psi:\Delta(\Omega)\rightarrow \mathbb{R}\cup \{\infty\}$ such that, for all $\pi\in\Pi(p_0)$,
    \[
    c(\pi)=\int \psi(p)\pi(dp).
    \]
    In this case, we say that $c$ is represented by $\psi$, or that $\psi$ is a representation of $c$, and call $\psi$ a \emph{measure of uncertainty}.
\end{definition}

In the remainder of this subsection, we propose a class of functions $\psi$
that have certain properties that are convenient to use and without loss of generality. 
Given a measure of uncertainty $\psi$, let 
\[
\dom \psi=\left\{p\in \Delta(\Omega)\:|\:\psi(p)<\infty\right\}
\]
be the \emph{effective domain} of $\psi$.
\begin{definition}\label{def:canonical}
    A measure of uncertainty
    $\psi:\Delta(\Omega)\rightarrow\mathbb{R}\cup\{\infty\}$ is \emph{canonical} if:
    \begin{enumerate}
        \item \label{psi_convex}$\psi$ is convex;
        \item \label{psi_lsc}$\psi$ is lower-semicontinuous;
        \item \label{psi_grounded} $\psi(p_0)=0$;
        \item \label{psi_positive} $\psi\geq 0$.
        \item \label{ri_prior} $p_0\in \ri(\dom\psi)$;
    \end{enumerate}
\end{definition}

 Before stating our formal result, we discuss some intuition for why these properties can be assumed without loss of generality. Convexity follows from Blackwell monotonicity of $c$. Lower-semicontinuity ensures that $V(\phi_F)$ is well-defined. Property \ref{psi_grounded} follows from groundedness of $c$. Property \ref{psi_positive} is a normalization that can be achieved by noting that adding an affine function to $\psi$ that is zero at $p_0$ does not affect the cost of information. These first four properties are well known and often assumed whenever posterior-separable costs are used. To our knowledge, Property \ref{ri_prior} is new. 
 \begin{wrapfigure}{r}{0.45\textwidth}
    \centering
    \begin{tikzpicture}
  \draw (0,-2) -- (0,2);
  \filldraw[fill=domainblue, draw=black]
    [xscale=1.5, yscale=2]
      (0,-1) arc[start angle=-90, end angle=90, radius=1]
      -- (0,-1) -- cycle;
    \node at (0,0) [circle,fill, inner sep=1pt](prior){};
    \node[left =1pt of prior]{$p_0$};
    \node at (.8,0) {dom $\psi$};
\end{tikzpicture}
    \caption{$p_0$ not in relative interior}
    \label{fig:not_ri}
\end{wrapfigure}
 To see why it can be assumed without loss of generality, suppose that $p_0$ is not in the relative interior, as in \Cref{fig:not_ri}. There, $p_0$ lies in the vertical line that describes the left of the boundary of $\dom \psi$. Any $\pi$ that puts positive probability on the right side of the line must also put positive probability on the left side of the line, where $\psi=\infty$, hence $c(\pi)=\infty$. Thus, if $c(\pi)<\infty$, it must be that the support of $\pi$ is contained within the vertical line. This means that nothing is lost by redefining the effective domain of $\psi$ to be just the vertical line.

Below, we state the formal result, which also includes a sufficient condition for uniqueness. 

\begin{proposition}\label{prop:canonps}
    Let the information cost function $c$ be canonical, as defined in \Cref{prop:de2017rationally}. Suppose that $c$ is posterior separable and represented by $\psi$. Then there exists a canonical measure of uncertainty $\tilde{\psi}$ such that $c$ is represented by $\tilde{\psi}$. Moreover,  if $\psi$ is convex and differentiable in the directions of its effective domain at $p_0$, then $\tilde{\psi}$ is unique.
\end{proposition}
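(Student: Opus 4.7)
The plan is to take the given $\psi$ and modify it in four steps, each preserving the integral $\int \psi\,d\pi$ on every $\pi\in\Pi(p_0)$, so that at the end the four canonical properties hold simultaneously.

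For existence, I would proceed as follows. Step one is to extract a convex representation. The cleanest route is \Cref{prop:dmr2021}: since $c$ is canonical, $c(\pi) = \sup_{\psi'\in\Psi}\int\psi'\,d\pi$ for a minimal family $\Psi$ of convex functions, while posterior separability forces $\pi\mapsto c(\pi)$ to be linear. If $\Psi$ contained two distinct $\psi_1,\psi_2$, pick witnesses $\pi_1,\pi_2$ at which removing $\psi_1$ (resp.\ $\psi_2$) strictly decreases the sup; on the mixture $\pi_\lambda = \lambda\pi_1+(1-\lambda)\pi_2$ no single $\psi'\in\Psi$ can simultaneously be optimal at both endpoints, contradicting linearity in $\pi$. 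Hence $|\Psi|=1$ and $c(\pi) = \int\psi_1\,d\pi$ for a convex $\psi_1$, and groundedness of $c$ yields $\psi_1(p_0)=0$. Step two enforces nonnegativity via an affine subgradient: convexity of $\psi_1$ provides an affine minorant $\ell$ with $\ell(p_0)=\psi_1(p_0)=0$; since every $\pi\in\Pi(p_0)$ has mean $p_0$, $\int\ell\,d\pi=\ell(p_0)=0$, so $\psi_2 := \psi_1 - \ell$ is convex and nonnegative, vanishes at $p_0$, and still represents $c$. Step three places $p_0$ in the relative interior: let $A$ be the affine hull of the supports of all $\pi\in\Pi(p_0)$ with $c(\pi)<\infty$, and set $\tilde\psi = \psi_2$ on $A\cap\dom\psi_2$, $\tilde\psi=\infty$ elsewhere. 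As illustrated in \Cref{fig:not_ri}, every such $\pi$ is already supported in $A$, so the cost on $\Pi(p_0)$ is unchanged, while $p_0\in\ri(\dom\tilde\psi)$ holds by construction.

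For uniqueness under the stronger hypothesis, suppose $\tilde\psi$ and $\tilde\psi'$ are two canonical representations of $c$. Their difference integrates to zero against every $\pi\in\Pi(p_0)$; writing such $\pi$ as a finite sum $\sum_i\alpha_i\delta_{p_i}$ with $\sum_i\alpha_ip_i=p_0$ converts this into a system of linear constraints on the values $(\tilde\psi-\tilde\psi')(p_i)$, whose only solutions are affine functions that vanish at $p_0$. Consequently $\tilde\psi$ and $\tilde\psi'$, and therefore each of them versus the given $\psi$, differ by an affine function on their common (effective) domain; write $\tilde\psi = \psi - \ell^* + c^*$. The canonical conditions $\tilde\psi\geq 0$ and $\tilde\psi(p_0)=0$ say precisely that the zero function is a subgradient of $\tilde\psi$ at $p_0$, and translating back through the affine shift identifies $\ell^*$ with an affine subgradient of $\psi$ at $p_0$. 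Since $\psi$ is convex and differentiable in the directions of $\dom\psi$ at $p_0$, this subgradient is unique, so $\ell^*$ is pinned down and $\tilde\psi=\tilde\psi'$.

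The step I expect to be the main obstacle is the first one. Two integrable functions can produce the same posterior-separable cost on $\Pi(p_0)$ while differing drastically off the supports of Bayes-plausible distributions, so Blackwell-monotonicity of $c$ is only an indirect constraint on $\psi$ itself, and naively ``taking the convex envelope'' needs care to leave the integrals over $\Pi(p_0)$ undisturbed. Routing the argument through \Cref{prop:dmr2021} and using posterior separability to collapse the minimal sup-family to a singleton seems to me the cleanest way to finesse these measure-theoretic subtleties.
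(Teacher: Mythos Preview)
Your uniqueness argument and the subtraction of an affine subgradient are essentially what the paper does (its Lemma in \Cref{app:proof_of_canonps} plays exactly the role of your ``difference is affine'' step). The genuine gap is in Step~1.

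The reduction to a singleton via \Cref{prop:dmr2021} does not go through. Minimality there is defined by cardinality---no $\hat\Psi\subset\Psi$ with $|\hat\Psi|<|\Psi|$ represents the same $c$---and for an infinite $\Psi$ this only rules out \emph{finite} subfamilies. Your witness step (``removing $\psi_1$ strictly decreases the sup at some $\pi_1$'') presumes that deleting one element changes the supremum, which minimality does not guarantee when $|\Psi|=\infty$. Concretely, take any convex $\psi\geq 0$ with $\psi(p_0)=0$ and set $\psi_n:=\max(\psi-\tfrac{1}{n},0)$. Then $\Psi=\{\psi_n:n\in\mathbb{N}\}$ satisfies both conditions of \Cref{prop:dmr2021}, represents the linear cost $c(\pi)=\int\psi\,d\pi$ (monotone convergence), and is minimal in the paper's sense since no finite subfamily suffices; yet removing any single $\psi_n$ leaves the supremum unchanged, so your witnesses do not exist. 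Your mixture argument is correct once $\Psi$ is finite, but nothing in \Cref{prop:dmr2021} lets you assume that without already knowing a convex singleton representation exists. The paper avoids this detour entirely: it observes that Blackwell monotonicity of $c$ forces the \emph{given} $\psi$ to be convex, because a failure of convexity at $p^*=\alpha p_1+(1-\alpha)p_2$ yields a Blackwell-ordered pair (split $p^*$ into $p_1,p_2$) with strictly lower cost.

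Two smaller points. First, the paper performs the relative-interior reduction \emph{before} subtracting a subgradient; in your order you invoke an affine minorant at $p_0$ without yet knowing $p_0\in\ri(\dom\psi_1)$, and the subdifferential of a convex function can be empty at boundary points of its effective domain. Second, in Step~3 the claim ``$p_0\in\ri(\dom\tilde\psi)$ holds by construction'' needs an argument: one must show that your affine hull $A$ coincides with the affine hull of the face of $\dom\psi_2$ that contains $p_0$ in its relative interior. The paper gets the same conclusion by iteratively intersecting with supporting hyperplanes at $p_0$, reducing dimension until $p_0$ lands in the relative interior.
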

\begin{proof}
    See \Cref{app:proof_of_canonps}.\footnote{The proof uses a construction shown in \cref{sec:dimension}, so the reader is advised to read that section before reading the proof.}
\end{proof}
From here onward, unless specified otherwise, we restrict our focus to
posterior-separable costs that have a canonical measure of uncertainty.

\subsection{Concavification}
If the cost of information is posterior separable, we can write
\[
V(\phi_F)=\max_{\pi\in\Pi(p_0)}\int N_F(p)\pi(dp)
\]
where $N_F=\phi_F-\psi$ is the net utility. This parallels the 
objective in Bayesian persuasion, where the sender's objective is 
to find the optimal distribution with respect to the integrand. The optimum is found by taking the \emph{concavification} of $N_F$,
i.e.
\[
V(\phi_F)=\cav(N_F)(p_0)=\inf\{\zeta(p_0):\,\zeta\geq N_F,\,\zeta \mbox{ concave} \}.
\]
One can then use the techniques of finding the optimum from Bayesian
persuasion, as found in \cite{Aumann1995} and \cite{Kamenica2011}.

\subsection{Dimension of effective domain}\label{sec:dimension}

As will become clear as we develop the main result, it will be useful to 
embed $\dom(\psi)$ in a space that has the same dimension, so the prior becomes an interior point in this space. To this end, 
let $\aff(\dom(\psi))$ be the affine hull of 
$\dom(\psi)$ (the smallest affine set containing it) and let $M$ be the dimension of this space. By \cite{Rockafellar+1970}, 
Theorem 1.6, there exists a 
bijective affine transformation 
\begin{equation}
\label{domnorm}
    T:\aff(\dom(\psi))\rightarrow\mathbb{R}^M
\end{equation}

For instance, in the case of full domain, where $\dom(\psi)=\Delta(\Omega)\subset \mathbb{R}^\Omega$,
one can define such a $T$ as mapping to $\mathbb{R}^{\vert \Omega\vert-1}$
by setting $T_i(p)=p(\omega_i)$ for $i\in \{1,...,\vert \Omega\vert-1\}$,
implicitly determining the probability of the remaining state 
$\omega_{\vert \Omega\vert}$ as $1-\sum_{i<\vert\Omega\vert}p(\omega_i)$.

Given any $\psi:\Delta(\Omega)\rightarrow \mathbb{R}\cup\{\infty\}$, we define $T^*\psi:\mathbb{R}^M\rightarrow \mathbb{R}\cup\{\infty\}$ by 
\begin{equation}
\label{psinorm}
    T^* \psi(y)=\begin{cases}
    \psi(T^{-1}(y)), & y\in T(\dom(\psi))\\
    \infty, & \mbox{otherwise}
\end{cases}
\end{equation}
Notice that, since $T$ is a linear transformation, 
the properties of a canonical $\psi$ are slightly modified by $T^*\psi$ as follows:
\begin{enumerate}
    \item $T^*\psi$ is convex;
    \item $T(p_0)\in \mbox{int}(T(\dom(\psi))$;
    \item $T^*\psi(T(p_0))=0$;
    \item $T^*\psi\geq 0$.
\end{enumerate}
Notice, in particular, that since $\dim(T(\dom(\psi)))=M$, $T(p_0)$ is
now in the \emph{interior} of $T(\dom(\psi))$, not just the relative 
interior. To visualize this, Figure 2 shows such a transformation,
where $\dom(\psi)$ lies along a line where the prior is collinear with
$\delta_{\omega_3}$; thus, learning anything about the relative likelihood  between $\omega_1$ and $\omega_2$ is infinitely costly. As such, the potential learning of the DM is 
one-dimensional. Therefore, one can translate the information 
acquisition problem to a one-dimensional effective domain, as illustrated by $T$.

\begin{figure}[t!]
\centering

\begin{tikzpicture}[
    scale=1.2,
    >=Stealth,
    point/.style={circle, inner sep=1.5pt, fill=black},
    redpoint/.style={circle, inner sep=1.5pt, fill=red},
    axis/.style={->, thick, blue!80!black}
]

    
    \coordinate (w1) at (0,0);
    \coordinate (w2) at (4,0);
    \coordinate (w3) at (2,3.5);
    \coordinate (A) at (2,0.3);
    \coordinate (B) at (2,1.9);

    \draw[thick] (w1) -- (w2) -- (w3) -- cycle;

    \node[below left] at (w1) {$\omega_1$};
    \node[below right] at (w2) {$\omega_2$};
    \node[above left] at (w3) {$\omega_3$};

    \draw[ultra thick, domainblue] (A) -- (B);

    \draw[thick, blue] (1.95,1.9) -- (2.05, 1.9) node[left] {\footnotesize $T^{-1}(b)$};
    \draw[thick, blue] (1.95,0.3) -- (2.05, 0.3) node[left] {\footnotesize $T^{-1}(a)$};
    
    \coordinate (p0) at (2, 1.2); 
    \node[redpoint] at (p0) {};
    \node[left, red, align=left, font=\scriptsize, xshift=6pt] at (p0) 
    {$p_0$ \hspace{2 mm}};
    
    \node[red, font=\large] at (2,1.6) {\rotatebox{90}{)}};
    \node[red, font=\large] at (2,0.8) {\rotatebox{90}{(}};

    \node[right, blue] at (2.05, 0.5) {\scriptsize $dom(\psi)$};
    
    \draw[thick, domainblue, dashed] (A) -- (2,-1);
    \node[right, blue, font=\scriptsize] at (2, -0.6) {$\aff(\dom(\psi))$};
    \draw[thick, domainblue, dashed] (B) -- (2,4.5);


    \begin{scope}[shift={(7, 1)}]
        
        \draw[->, thick, purple!80!black] (-2, 0) -- (3, 0) node[right] {$\mathbb{R}$};
        
        \draw[ultra thick, domainblue] (-1, 0) -- (1, 0);
        
        \draw[thick, blue] (-1, 0.1) -- (-1, -0.1) node[below] {$a$};
        \draw[thick, blue] (1, 0.1) -- (1, -0.1) node[below] {$b$};
        
        \node[below, blue] at (0, -0.35) {$T(\dom(\psi))$};
        
        \coordinate (Tp0) at (0,0);
        \node[redpoint] at (Tp0) {};
        \node[above, red, font=\scriptsize, yshift=3pt] at (0, 0) 
        {$T(p_0)$};
        
        \node[red, font=\large] at (-0.4, 0) {(};
        \node[red, font=\large] at (0.4, 0) {)};

    \end{scope}

    \draw[->, thick, purple!80!black, bend left=45] (2.1, 1.7) to node[midway, above] {$T$} (6.5, 1.2);

\end{tikzpicture}

\caption{Linear transformation $T$ for dimension matching}
\end{figure}

To economize on notation, we write $\bar{\psi}\coloneqq T^*\psi$, $\bar{p}\coloneqq T(p)$, and $Y\coloneqq T(\dom(\psi))$ in the sequel.

\subsection{Unique Hyperplane Property}

In this section, we present an alternative description of concavification, equivalent to that presented in Section 2.3. This relies on the supporting hyperplane of the concave function $\cav(N_F)$, so that all values of the function lie below this hyperplane.%
\footnote{This is analogous to the ``Lagrangian lemma" of \cite{caplin2022rationally}.}
We use this representation of the concavification to develop key properties
of our cost function representation.

Let $T$ be as in \eqref{domnorm}. For any hyperplane 
$\mathcal{H}\in \mathbb{R}^{M+1}$, let $\lambda$ be its normal vector. 
Define $\bar{N}_F:\mathbb{R}^M\rightarrow \mathbb{R}$ by 
\begin{equation}
    \bar{N}_F(y)=\begin{cases}
    N_F(T^{-1}(y)), & y\in Y\\
    -\infty, & \mbox{otherwise}
    \end{cases}
\end{equation}
An equivalent formula for the concavification for a given prior $\bar{p}_0$ is\footnote{This formula follows from \cite{Rockafellar+1970}, Theorem 18.8, which states 
that any closed convex set in a Euclidean space is the intersection of 
the closed half-spaces tangent to it. Since the epigraph $\epi(-\cav(\bar{N}_F))=\{(\bar{p},t)|t\geq -\cav(\bar{N}_F)(\bar{p})\}$ is 
closed and convex, there exist such tangent hyperplanes as described in 
(\ref{eqn:min_concavification}) for $\bar{p}=\bar{p}_0$, so the minimum is achieved.}

\begin{equation}\label{eqn:min_concavification}
    \cav(\bar{N}_F)(\bar{p}_0)=\min_{\substack{\vphantom{.}\\ \mathllap{\lambda}\in \mathrlap{\mathbb{R}^M }\\ \mathllap{k}\in \mathrlap{\mathbb{R}}}}\{\lambda\cdot \bar{p}_0+k:\,\lambda\cdot y+k\geq \bar{N}_F(y),\,\forall y\in Y \}.
\end{equation}
A pair $(\lambda,k)$ is a solution to the minimization problem above if $\lambda\cdot y+k\geq \bar{N}_F(y)$ for all $y \in Y$ and $\lambda\cdot \bar{p}_0+k=\cav(\bar{N}_F)(\bar{p}_0)=V(\phi_F)$. Thus, we can solve for $k$ in this expression and denote the set of solutions to \eqref{eqn:min_concavification} by
\[
\Lambda_F=\left\{ \lambda\in \mathbb{R}^{M}|\lambda\cdot (y-\bar{p}_0)+V(\phi_F)\geqslant 
\bar{N}_F(y), \, \forall y\in Y
\right\}
\]
By \cite{Rockafellar+1970}, Theorem 23.2, $\Lambda_F$ is closed, convex, 
and non-empty. Geometrically, the elements of $\Lambda_F$ are the normal vectors of hyperplanes that are tangent to the graph of $\bar{N}_F$ (see \cref{fig:IIA}). When $\bar{N}_F$ is well-behaved 
(in a sense that will be made precise shortly), this set is a singleton, motivating the following definition:

\begin{definition}
    $\psi$ satisfies the \emph{unique hyperplane property (UHP)} if, for all 
    menus $F$, $\Lambda_F$ is a singleton.
\end{definition}

\subsection{Joint-Directional Differentiability}

As will become clear later, the unique hyperplane property is related to a notion of differentiability that we now discuss. 

\begin{definition}
    Let $\bar{\psi}:Y\rightarrow \mathbb R \cup \{\infty\}$ be as defined
    in Section 2.4. The \emph{subdifferential} of $\bar{\psi}$ at $y\in Y$ is
    \[
      \partial \bar{\psi}(y)=\{\lambda \in \mathbb{R}^M:\, \lambda\cdot(q-y)\leq \bar{\psi}(q)-\bar{\psi}(y),\forall q\in Y\}.
    \]
\end{definition}

    A convex function is differentiable if and only if its
    subdifferential is a singleton (\cite{Rockafellar+1970}, Theorem 25.1). To compare the subdifferentials at various points when they are not singletons, we introduce the following property
    of $\psi$.
    
\begin{definition} \label{def:NDISD}
    The function $\bar{\psi}$ is \emph{non-differentiable
        in the direction $\delta \in \mathbb{R}^M\setminus \{0\}$ at $\{\bar{p}_i\}_{i=1}^K$} if there exist $\lambda_i\in \partial\bar{\psi}(\bar{p}_i)$ such that:
        \begin{enumerate}
            \item $\lambda_i+\delta\in \partial\bar{\psi}(\bar{p}_i),\forall i\in\{1,...,K\}$ and 
            \item $\delta\cdot (\bar{p}_i-\bar{p}_0)=0,\forall i\in\{1,...,K\}$.
        \end{enumerate}
        We say that $\bar{\psi}$ is \emph{non-differentiable
        in the same direction (NDISD) at $\{\bar{p}_i\}_{i=1}^K$} if there exists a $\delta \in \mathbb{R}^M\setminus \{0\}$ such that $\bar{\psi}$ is non-differentiable
        in the direction $\delta$ at $\{\bar{p}_i\}_{i=1}^K$.
\end{definition}
Geometrically, we can think of an element of the subdifferential as the slope of a hyperplane that is tangent to the graph of $\bar{\psi}$ at the point $\bar{p}_i$. When there is more than one such tangent hyperplane, it means that the function $\bar{\psi}$ has a kink at that point, and we can think of the difference between the two slopes ($\delta$) as a direction of that kink, in the sense that one can wobble the hyperplane by adding $\delta$ and remain tangent. The condition above states that at all the points $\{\bar{p}_i\}_{i=1}^K$ there is a kink and that they all share a wobbling direction $\delta$.  Thus, the direction $\delta$ is orthogonal to the convex hull of $\{\bar{p}_i-\bar{p}_0\}_{i=1}^K$. We illustrate this in Figure 3.

\begin{figure}[t!]
\centering
\tdplotsetmaincoords{70}{120}

\begin{tikzpicture}[
    tdplot_main_coords,
    scale=2,
    >=Stealth,
    font=\footnotesize
]

    \coordinate (RidgeStart) at (0, 0, 1.5);
    \coordinate (RidgeEnd)   at (0, 6, 1.5);
    
    \coordinate (BaseL_Start) at (-1.5, 0, 0);
    \coordinate (BaseL_End)   at (-1.5, 6, 0);
    \coordinate (BaseR_Start) at (1.5, 0, 0);
    \coordinate (BaseR_End)   at (1.5, 6, 0);
    
    \coordinate (P1) at (0, 1.5, 1.5); 
    \coordinate (P0) at (0, 3.0, 1.5); 
    \coordinate (P2) at (0, 4.5, 1.5); 

    \coordinate (Y1) at (0, 1.5, 0);
    \coordinate (Y0) at (0, 3.0, 0);
    \coordinate (Y2) at (0, 4.5, 0);

    
    \fill[red!70!black, opacity=0.8] (RidgeStart) -- (BaseL_Start) -- (BaseL_End) -- (RidgeEnd) -- cycle;
    
    \fill[red!50!white, opacity=0.9] (RidgeStart) -- (BaseR_Start) -- (BaseR_End) -- (RidgeEnd) -- cycle;
    
    \draw[red!30!black, thin] (RidgeStart) -- (BaseL_Start) -- (BaseL_End) -- (RidgeEnd);
    \draw[red!30!black, thin] (RidgeStart) -- (BaseR_Start) -- (BaseR_End) -- (RidgeEnd);
    \draw[gray, thin] (BaseL_Start) -- (BaseR_Start);

    \draw[->, gray, thin] (0,0,0) -- (2,0,0);   
    \draw[->, gray, thin] (0,0,0) -- (0,6.5,0); 
    \draw[->, gray, thin] (0,0,0) -- (0,0,2.5); 

    
    \fill[black] (Y1) circle (1pt) node[right] {$p_1$};
    \fill[black] (Y0) circle (1pt) node[right] {$p_0$};
    \fill[black] (Y2) circle (1pt) node[right] {$p_2$};

    \draw[dashed, blue, thick] (P1) -- (Y1);
    \draw[dashed, blue, thick] (P0) -- (Y0);
    \draw[dashed, blue, thick] (P2) -- (Y2);

    \fill[red!50!black] (P1) circle (1pt);
    \fill[red!50!black] (P0) circle (1pt);
    \fill[red!50!black] (P2) circle (1pt);

    \draw[->, red!50!black, very thick] ($(P0)+(0,0,0.0)$) -- ($(P1)+(0,0,0.0)$) 
        node[midway, above, yshift=2pt, font=\scriptsize] {$p_1 - p_0$};
        
    \draw[->, red!50!black, very thick] ($(P0)+(0,0,0.0)$) -- ($(P2)+(0,0,0.0)$) 
        node[midway, above, yshift=2pt, font=\scriptsize] {$p_2 - p_0$};

    
    \def\len{1.0}
    \def\lx{-0.7} \def\lz{0.8} 
    \def\rx{0.7}  \def\rz{0.8} 
    
    \coordinate (L1_P1) at ($(P1) + (\rx, 0, \rz)$);
    \coordinate (L2_P1) at ($(P1) + (\lx, 0, \lz)$);

    \draw[->, violet, thick] (P1) -- (L1_P1) node[above left] {$\lambda_1$};
    \draw[->, violet, thick] (P1) -- (L2_P1) node[above left] {$\lambda_2$};

    \coordinate (L1_P2) at ($(P2) + (\rx, 0, \rz)$);
    \coordinate (L2_P2) at ($(P2) + (\lx, 0, \lz)$);
    
    \draw[->, violet, thick] (P2) -- (L1_P2) node[above left] {$\lambda_1$};
    \draw[->, violet, thick] (P2) -- (L2_P2) node[above left] {$\lambda_2$};

    
    \draw[->, green!60!black, ultra thick] (L2_P1) -- (L1_P1) 
        node[midway, above, fill=white, inner sep=1pt] {$\delta$};
    
    \draw[->, green!60!black, ultra thick] (L2_P2) -- (L1_P2) 
        node[midway, above, fill=white, inner sep=1pt] {$\delta$};

\end{tikzpicture}

\caption{Non-differentiability in direction $\delta$}
\end{figure}

This definition will be useful in constructing menus that take advantage
of the directions of non-differentiability to violate our axioms. In
particular, it enables us to define cost functions that are 
``sufficiently smooth."

\begin{definition}
    Given prior $p_0$, the function $\psi$ satisfies \emph{joint-directional 
    differentiability (JDD)} if there do not exist $\delta\in\mathbb{R}^M\setminus \{0\}$ 
    and points $\{p_i\}_{i=1}^K$, with $p_0\in co({p_1,\ldots,p_K})$,
     such that $\bar{\psi}$ is non-differentiable in the 
    direction $\delta$ at $\{\bar{p}_i\}_{i=1}^K$.
\end{definition}

Notice that under this definition, a function that is differentiable everywhere
satisfies JDD, as there are no points at all at which $\bar{\psi}$ is 
non-differentiable. In addition, in the case of a binary state space, JDD is 
equivalent to differentiability at the prior $p_0$, as condition (2) of \Cref{def:NDISD} precludes and $\delta\neq 0$ for $\bar{p}_i\neq \bar{p}_0$ when $M=1$.

\noindent\textbf{Remark:} Whether $\psi$ satisfies JDD does not 
depend on the choice of $T$. Suppose $\hat{T}$ is another choice. Because $T$ and $\hat{T}$ have the same dimension in the domain and codomain, there must be an invertible linear transformation $A:\mathbb{R}^M\rightarrow \mathbb{R}^M$ such that $\hat{T}=A\circ T$. Letting
$\mathbf{A}$ be the matrix associated with $A$, if $\lambda\in \partial \bar{\psi}(x)$, then 
$(\mathbf{A}^{-1})^\intercal\lambda \in \partial (A\circ T)^* \psi$, where
$(\mathbf{A}^{-1})^\intercal$ is the transpose of the inverse of $\mathbf{A}$.

\section{Main Theorem}

\subsection{Axioms}

Our main result relies on the following axioms.

\begin{axiom:iia}
\phantomsection
\label{axiom:iia}
    If $F\sim F\cap G \sim G$ then $F\sim F \cup G$.
\end{axiom:iia}

For any menus $F$ and $G$, we always have $F\cup G\succsim F,G\succsim F\cap G$, because flexibility is never harmful. When $F\sim F\cap G$, we may say that the elements of $F$ that do not belong to $F\cap G$ are irrelevant: the DM can achieve the
same payoff even by ignoring these additional options. The axiom states that these irrelevant options remain irrelevant when combined with other irrelevant options. 

\begin{axiom:ie}%
\label{axiom:ie}
    For every menu F, there exists an act $h$ such that $h\sim F\sim F\cup h$.
\end{axiom:ie}

When faced with a singleton menu $h$, it is always optimal for the agent to acquire no information. The act $h$ in this axiom is just as good as $F$, yet it adds irrelevant flexibility to $F$. Thus, the act $h$ can be thought of as an ignorance equivalent---a version of the menu $F$ that requires no information to be acquired. 

First appearing in \cite{de2014axiomatic}, this property also appears in 
\cite{muller2023rational}, who coined the term ``ignorance equivalent'' to refer to $h$. The term comes from an analogy with the role of a certainty equivalent  in the context of decisions
under risk. In decisions under risk, a risk-neutral principal can extract the most surplus by offering the certainty equivalent (i.e. fully 
insuring). Analogously, in rational inattention, the principal could offer the ignorance equivalent to keep the agent's surplus at the same level, while saving
all information acquisition costs; thus the additional surplus would entirely go to the principal. The ignorance equivalent also serves as a tool to see which actions might 
ever be chosen in an expanded menu: when comparing $F$ and $F\cup\{g\}$, it is 
sufficient to know that $h$, the ignorance equivalent, dominates $g$ (i.e. is better in every state) in order to 
conclude that $g$ will never be chosen from $F\cup\{g\}$, and therefore 
$F\sim F\cup\{g\}$.

\subsection{Main theorem}

We can now state our main result:

\begin{theorem}
    \label{possep}
    The following statements are equivalent:
    \begin{enumerate}
        \item \label{item:iia_and_ie}$\succsim$ is a rationally inattentive preference satisfying  \iia and \ie;
        \item \label{item:jdd}$\succsim$ has a posterior-separable representation with a canonical measure of uncertainty $\psi$ satisfying joint-directional differentiability.
        \item \label{item:uhp}$\succsim$ has a posterior-separable representation with a canonical measure of uncertainty $\psi$ satisfying the unique hyperplane property.
    \end{enumerate}
    Moreover, fixing the utility function $u$, the canonical measure of uncertainty $\psi$ that represents $\succsim$ in (2) and (3) is unique.
\end{theorem}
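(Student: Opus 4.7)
My plan is to prove the equivalence via two pairwise implications, (2)$\Leftrightarrow$(3) and (1)$\Leftrightarrow$(3). The first is a self-contained convex-analytic duality between $\psi$ and $\Lambda_F$; the second is where the axioms enter, with (3)$\Rightarrow$(1) direct and (1)$\Rightarrow$(3) the main obstacle.

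For (2)$\Leftrightarrow$(3), the key observation is that elements of $\Lambda_F$ translate into subgradients of $\bar\psi$. Given an optimal $\pi^*$ for $F$ with support $\{\bar p_i\}$ and $\lambda \in \Lambda_F$, tangency of $\lambda \cdot (y - \bar p_0) + V(\phi_F)$ to $N_F^*$ at each $\bar p_i$ forces $a_{f_i} - \lambda \in \partial\bar\psi(\bar p_i)$, where $f_i \in F$ is optimal at $\bar p_i$ with affine slope $a_{f_i}$ in $Y$-coordinates. Two distinct $\lambda, \lambda' \in \Lambda_F$ thus give two subgradients at each $\bar p_i$ differing by the common vector $\delta = \lambda - \lambda' \neq 0$; agreement of both hyperplanes at $\bar p_0$ and at each $\bar p_i$ yields $\delta \cdot (\bar p_i - \bar p_0) = 0$; and the martingale property gives $p_0 \in \mathrm{co}(\{p_i\})$. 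This is precisely NDISD forbidden by JDD, so (2)$\Rightarrow$(3). Conversely, starting from a witness $(\delta, \{\bar p_i\}, \{\lambda_i\})$ of NDISD that violates JDD, I would build acts $f_i$ realizing the tangent affine functions $\bar\psi(\bar p_i) + \lambda_i\cdot (y - \bar p_i)$ (using surjectivity of $u$) and set $F = \{f_i\}$; then $V(\phi_F)=0$ is achieved on $\{p_i\}$, and both $\lambda = 0$ and $\lambda = -\delta$ satisfy the defining inequality of $\Lambda_F$, so UHP fails.

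For (3)$\Rightarrow$(1), I would produce the ignorance equivalent of $F$ as the act $h_F$ whose utility realizes the unique supporting hyperplane $H_F$. Then $V(\phi_{h_F}) = H_F(\bar p_0) = V(\phi_F)$, and using $H_F \geq N_F$ with $\psi \geq 0$, I would verify $H_F \geq N_{F \cup h_F}$, so $V(\phi_{F \cup h_F}) \leq V(\phi_F)$, establishing \ie. For \iia, if $V(\phi_F) = V(\phi_G) = V(\phi_{F\cap G}) = v$, then $H_F$ dominates $N_F \geq N_{F\cap G}$ with value $v$ at $\bar p_0$, making it an optimal hyperplane for $F\cap G$; uniqueness forces $H_F = H_{F\cap G} = H_G$, so $H_F$ dominates $\max(N_F, N_G) = N_{F\cup G}$ and $V(\phi_{F\cup G}) \leq v$.

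The hardest direction, (1)$\Rightarrow$(3), starts from the general canonical representation $c(\pi) = \sup_{\psi \in \Psi}\int \psi\,d\pi$ of Proposition~\ref{prop:dmr2021}. From \ie, the equation $V(\phi_{F\cup h_F}) = V(\phi_F) = \phi_{h_F}(p_0)$ yields $c(\pi) \geq \int (\phi_F - \phi_{h_F})_+\,d\pi$ for every $\pi\in \Pi(p_0)$, so $\phi_{h_F}$ acts as a common affine upper bound on $\phi_F - \psi$ uniformly in $\psi \in \Psi$. I would then invoke \iia to collapse $\Psi$ to a singleton: for minimal $\Psi$ containing distinct $\psi_1, \psi_2$, there exist distributions separating the two cost functionals, and via the menu-from-subgradients construction used in (2)$\Leftrightarrow$(3), one can package these into menus $F, G$ with $V(\phi_F) = V(\phi_G) = V(\phi_{F\cap G})$ yet $V(\phi_{F\cup G})$ strictly larger, because $F\cup G$ can exploit posteriors that are cheap under one $\psi_i$ but penalized under the other, contradicting \iia. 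Once $\Psi$ is a singleton, UHP follows from (2)$\Leftrightarrow$(3): any failure of UHP would re-run the converse construction above and witness a violation of \ie or the just-established structure. The principal obstacle is the menu-construction step in collapsing $\Psi$: one must simultaneously preserve $V(\phi_F) = V(\phi_G) = V(\phi_{F\cap G})$ while achieving strict improvement from $F \cup G$, which requires a delicate exploitation of the minimality of $\Psi$ together with the existence of tangent hyperplanes separating $\psi_1$ from $\psi_2$.
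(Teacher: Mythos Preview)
Your treatment of (2)$\Leftrightarrow$(3) and of (3)$\Rightarrow$(1) is correct and matches the paper's arguments (Sections~4.2, 4.5, 4.4). Your ignorance-equivalent construction---taking $h_F$ to realize the optimal hyperplane itself and invoking $\psi\geq 0$---is in fact a bit cleaner than the paper's, which uses $\lambda_F+\partial\bar\psi(\bar p_0)$.

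The gap is in (1)$\Rightarrow$(3), and it is twofold. First, your route to posterior separability by ``collapsing $\Psi$'' is not the paper's proof; it is essentially the paper's informal Section~5 discussion (explicitly restricted to finite $\Psi$ and offered only as intuition). The actual proof (Section~4.1) is constructive and different: it sets $\mathcal{H}=\{h:0\sim\{0,h\}\}$, uses \iia inductively to show $F\subseteq\mathcal{H}\Rightarrow 0\sim F\cup\{0\}$, uses \ie to rewrite the dual formula $c(\pi)=\sup_F\langle\phi_F,\pi\rangle-V(\phi_F)$ as a supremum over menus contained in $\mathcal{H}$, and then reads off $\psi(p)=\sup_{h\in\mathcal{H}}\langle h,p\rangle$ directly. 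Your own ``principal obstacle''---manufacturing $F,G$ with $V(\phi_F)=V(\phi_G)=V(\phi_{F\cap G})$ but $V(\phi_{F\cup G})$ strictly larger from two generic $\psi_1\neq\psi_2\in\Psi$---is genuine, and the passage from ``distributions separating $\psi_1,\psi_2$'' to ``menus with controlled intersection'' is not supplied. Second, even once $\Psi$ is a singleton you are not done: a posterior-separable $\psi$ need not satisfy JDD/UHP (the Section~3.3 example). Your sentence ``UHP follows from (2)$\Leftrightarrow$(3)'' is a non sequitur; what is needed is that failure of JDD contradicts \iia. The paper closes this (Section~4.3) by taking an extreme direction $\delta$ of the NDISD set, choosing small $\epsilon$ with $\delta\cdot\epsilon>0$, building acts $f_\alpha,f_\beta$ from subgradients at $\bar p_0\pm\epsilon$, and exhibiting $F=H\cup\{f_\alpha\}$, $G=H\cup\{f_\beta\}$ with $V(\phi_F)=V(\phi_G)=V(\phi_H)=0$ but $V(\phi_{F\cup G})\geq\tfrac12\delta\cdot\epsilon>0$. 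Your appeal to ``re-running the converse construction'' does not provide this step.
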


\subsection{Example}

To see how the three statements in Theorem \ref{possep} are related, we 
present the following example. In particular, we highlight the role that 
the unique hyperplane property has in our analysis, showing that, as the 
unique hyperplane property is violated, this leads to a violation of \iia. 

Consider a setup with two states, in which the prior $p_0=0.5$ and 
the  cost of information is given (\Cref{fig:IIA}a) by 
\[
\psi(p)=\vert p-0.5\vert +(p-0.5)^2
\]
As $\psi$ is non-differentiable at the prior, it violates joint directional
differentiability. Indeed, with the menu $F=\{0\}$, there are multiple optimal 
hyperplanes, all of which yield $V(F)=0$. Now suppose we consider two 
new actions $\{a,b\}$, with respective payoffs $\phi_{\{a\}}(p)=2.5p-\frac{21}{16}$
and $\phi_{\{b\}}(p)=-2.5p+\frac{19}{16}$. One easily verifies that the DM is indifferent 
between the menus $\{0\}$, $\{0,a\}$, and $\{0,b\}$, as seen by the fact 
that all three generate respective optimal hyperplanes $\lambda$
such that the value of $\lambda$ at $p_0=0.5$ is $0$ (\Cref{fig:IIA}b and \Cref{fig:IIA}c). 
However, if we take the union $\{0,a,b\}$, the optimal hyperplane changes: 
it now becomes optimal to choose posteriors $p\in\{0,1\}$ (\Cref{fig:IIA}d), 
yielding a value of the menu of $\frac{3}{8}$. As $a$ and $b$ are therefore
only relevant when we take the union of the menus, but not when we add
them individually to $0$, \iia is not satisfied.

\begin{figure}[t!]
\centering

\begin{subfigure}[t]{0.4\textwidth}
\resizebox{60 mm}{45 mm}{
\begin{tikzpicture}
    \begin{axis}[
        samples=500,
        ytick = {0},
        yticklabels={$0$},
        xtick={0,1},
        ymajorticks=false,
        ymax=1,
        ymin=-0.5,
        xmin=-0.1,
        xmax=1.1,
        axis on top=false,
        axis x line = middle,
        axis y line = middle,
        axis line style={black},
        ylabel={$u$},
        xlabel={$p$},
    ]     
    \addplot[red,samples=200,line width = 2][domain=0:0.5] {0.5-x+(x-0.5)^2};
    \addplot[red,samples=200,line width = 2][domain=0.5:1] {x-0.5+(x-0.5)^2};
    \end{axis}
\end{tikzpicture}
}
\caption{$\psi$ that is NDISD}
\end{subfigure}
\hspace{1 mm}
\begin{subfigure}[t]{0.4\textwidth}
\resizebox{60 mm}{45 mm}{
\begin{tikzpicture}
    \begin{axis}[
        samples=500,
        ytick = {0},
        yticklabels={$0$},
        xtick={0,1},
        ymajorticks=false,
        ymax=1,
        ymin=-0.5,
        xmin=-0.1,
        xmax=1.1,
        axis on top=false,
        axis x line = middle,
        axis y line = middle,
        axis line style={black},
        ylabel={$u$},
        xlabel={$p$},
    ]     
    \addplot[red,samples=200,line width = 2][domain=0:0.5] {x-0.5-(x-0.5)^2};
    \addplot[red,samples=200,line width = 2][domain=0.5:1] {-x+0.5-(x-0.5)^2};
    \node[fill, red, circle, inner sep=1.5pt] 	at (axis cs:0.5,0){};
    \addplot[orange,samples=200,line width = 2][domain=0:0.5] {2.5*x-1.3125+x-0.5-(x-0.5)^2};
    \addplot[orange,samples=200,line width = 2][domain=0.5:1] {2.5*x-1.3125-x+0.5-(x-0.5)^2};
    \addplot[orange, dashed, samples=200, line width = 1][domain=0:1] {x-0.5};
    \node[fill, orange, circle, inner sep=1.5pt] 	at (axis cs:0.75,0.25){};
    \node[orange, below, yshift=-4pt] at (axis cs:0.75,0.25) {$\bar{N}_{\{a\}}$};
    \node[red, right] at (axis cs:0.8,-0.35) {$\bar{N}_{\{0\}}$};
    \end{axis}
\end{tikzpicture}
}
\caption{Optimal hyperplanes of menu $\{0,a\}$}
\end{subfigure}

\vspace{5 mm} 

\begin{subfigure}[t]{0.4\textwidth}
\resizebox{60 mm}{45 mm}{
\begin{tikzpicture}
    \begin{axis}[
        samples=500,
        ytick = {0},
        yticklabels={$0$},
        xtick={0,1},
        ymajorticks=false,
        ymax=1,
        ymin=-0.5,
        xmin=-0.1,
        xmax=1.1,
        axis on top=false,
        axis x line = middle,
        axis y line = middle,
        axis line style={black},
        ylabel={$u$},
        xlabel={$p$},
    ]     
    \addplot[red,samples=200,line width = 2][domain=0:0.5] {x-0.5-(x-0.5)^2};
    \addplot[red,samples=200,line width = 2][domain=0.5:1] {-x+0.5-(x-0.5)^2};
    \node[fill, red, circle, inner sep=1.5pt] 	at (axis cs:0.5,0){};
    \addplot[violet,samples=200,line width = 2][domain=0:0.5] {-2.5*x+1.1875+x-0.5-(x-0.5)^2};
    \addplot[violet,samples=200,line width = 2][domain=0.5:1] {-2.5*x+1.1875-x+0.5-(x-0.5)^2};
    \addplot[violet, dashed, samples=200, line width = 1][domain=0:1] {0.5-x};
    \node[fill, violet, circle, inner sep=1.5pt] 	at (axis cs:0.25,0.25){};
    \node[violet, below, yshift=-4pt] at (axis cs:0.25,0.25) {$\bar{N}_{\{b\}}$};
    \node[red, left] at (axis cs:0.8,-0.35) {$\bar{N}_{\{0\}}$};
    \end{axis}
\end{tikzpicture}
}
\caption{Optimal hyperplanes of menu $\{0,b\}$}
\end{subfigure}
\hspace{1 mm}
\begin{subfigure}[t]{0.4\textwidth}
\resizebox{60 mm}{45 mm}{
\begin{tikzpicture}
\begin{axis}[
        samples=500,
        ytick = {0},
        yticklabels={$0$},
        xtick={0,1},
        ymajorticks=false,
        ymax=1,
        ymin=-0.5,
        xmin=-0.1,
        xmax=1.1,
        axis on top=false,
        axis x line = middle,
        axis y line = middle,
        axis line style={black},
        ylabel={$u$},
        xlabel={$p$},
    ]     
    \addplot[red,samples=200,line width = 2][domain=0:0.5] {x-0.5-(x-0.5)^2};
    \addplot[red,samples=200,line width = 2][domain=0.5:1] {-x+0.5-(x-0.5)^2};
    \addplot[orange,samples=200,line width = 2][domain=0:0.5] {2.5*x-1.3125+x-0.5-(x-0.5)^2};
    \addplot[orange,samples=200,line width = 2][domain=0.5:1] {2.5*x-1.3125-x+0.5-(x-0.5)^2};
    \addplot[violet,samples=200,line width = 2][domain=0:0.5] {-2.5*x+1.1875+x-0.5-(x-0.5)^2};
    \addplot[violet,samples=200,line width = 2][domain=0.5:1] {-2.5*x+1.1875-x+0.5-(x-0.5)^2};
    \addplot[brown, dashed, samples=200, line width = 1][domain=0:1] {0.4375};
    \node[fill, brown, circle, inner sep=1.5pt] 	at (axis cs:0,0.4375){};
    \node[fill, brown, circle, inner sep=1.5pt] 	at (axis cs:1,0.4375){};
    \node[violet, below, yshift=-4pt] at (axis cs:0.25,0.25) {$\bar{N}_{\{a\}}$};
    \node[orange, below, yshift=-4pt] at (axis cs:0.75,0.25) {$\bar{N}_{\{a\}}$};
    \node[red, right] at (axis cs:0.8,-0.35) {$\bar{N}_{\{0\}}$};
    \end{axis}
\end{tikzpicture}
}
\caption{Improvement for taking union $\{0,a,b\}$}
\end{subfigure}

\caption{Posterior-separable cost function violating IIA}
\label{fig:IIA}
\end{figure}

\section{Proof}

In this section, we present a proof of our main result. To do so, we build 
on the connections that we illustrated in the example in Section 3.3 
between our two axioms and cost functions $\psi$ that satisfy UHP 
(alternatively, JDD). For a high-level discussion of the role that each
axiom plays in restricting the cost of information, see Section 5.

The proof goes as follows. \Cref{proof:posterior_separable} shows that, together, \iia and \ie imply that $\succsim$ has a posterior-separable representation. From this point on, all sections assume a posterior separable representation. \Cref{proof:UHP_implies_JDD} shows that the unique hyperplane property implies joint-direction differentiability and \Cref{proof:JDD_implies_UHP} shows that joint-directional differentiability implies the unique hyperplane property, proving the equivalence between (2) and (3) of Theorem \ref{possep}.  Then, \Cref{proof:IIA_implies_JDD} shows that \iia implies joint-directional differentiability; together with \Cref{proof:posterior_separable}, this shows that (1) implies (2). Finally, \Cref{proof:UHP_implies_iia} shows that the unique hyperplane property implies \iia and \ie, showing that (3) implies (1) and finishing the proof.

\subsection{Posterior-separable representation}\label{proof:posterior_separable}
Throughout this subsection, assume that the rationally inattentive preference $\succsim$ satisfies \iia and \ie. We will show that this implies that it has a posterior separable representation.

Let $\mathcal{H}$ be the set of acts that are irrelevant to the singleton
$\left\{ 0\right\} $, that is,
\begin{equation}
\mathcal{H}=\left\{ h\in\mathcal{F}:\:0\sim\left\{ 0,h\right\} \right\} .\label{eq: definition of H}
\end{equation}

\begin{lemma}\label{lem:properties of H}
For any menu $F$, we have $F\subseteq\mathcal{H}$ if and only if
$0\sim F\cup\left\{ 0\right\} $.
\end{lemma}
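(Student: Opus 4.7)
The plan is to prove both implications directly, with the forward direction being routine from monotonicity, and the reverse direction requiring an induction on $|F|$ that invokes \iia at each step.

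For the ``only if'' direction, assume $0 \sim F \cup \{0\}$ and fix any $h \in F$. The inclusions $\{0\} \subseteq \{0, h\} \subseteq F \cup \{0\}$ combined with monotonicity of $V$ (property 1 listed after Proposition \ref{prop:dmr2021}) yield $0 \preceq \{0, h\} \preceq F \cup \{0\}$, so the hypothesis forces $0 \sim \{0, h\}$ and hence $h \in \mathcal{H}$. This needs no axioms beyond monotonicity.

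For the ``if'' direction, I would proceed by induction on the cardinality of $F$. The cases $|F| = 0$ and $|F| = 1$ are immediate from the definition of $\mathcal{H}$. For the inductive step, write $F = F' \cup \{h\}$ with $|F'| = n$ and $h \notin F'$ (if $h \in F'$, then $F = F'$ and we are already done). Since $F' \subseteq \mathcal{H}$, the inductive hypothesis gives $0 \sim F' \cup \{0\}$, while $h \in \mathcal{H}$ gives $0 \sim \{0, h\}$. Now take the two menus $A \coloneqq F' \cup \{0\}$ and $B \coloneqq \{0, h\}$. Their intersection is $A \cap B = \{0\}$ (using $h \notin F'$), so $A \sim A \cap B \sim B$. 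Applying \iia yields $A \sim A \cup B = F \cup \{0\}$, which combined with $A \sim 0$ gives $0 \sim F \cup \{0\}$, completing the induction.

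The only potential subtlety is verifying that $A \cap B = \{0\}$ exactly (so that \iia is applicable with the required chain of indifferences), which is why we must explicitly separate out the case $h \in F'$. Beyond that, the argument is essentially a bookkeeping exercise: \iia is the engine that lets us pass from pairwise irrelevance of each $h$ to collective irrelevance of $F$ when combined with $\{0\}$. No use of \ie, posterior separability, or any representation theorem is needed for this lemma.
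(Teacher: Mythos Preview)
Your proof is correct and essentially identical to the paper's: both dispatch one direction via monotonicity in menu inclusion and handle the other by induction on $|F|$, applying \iia to the pair $F'\cup\{0\}$ and $\{0,h\}$ at the inductive step. You are slightly more explicit than the paper in verifying that the intersection is exactly $\{0\}$; one cosmetic point is that your ``if'' and ``only if'' labels are swapped relative to how the biconditional is stated.
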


\begin{proof}
If $f\in F$, we always have $F\cup\{0\}\succsim \{0,f\}\succsim 0$, so $0\sim F\cup \{0\}$ implies $f\in \mathcal{H}$. The other direction can be proven by induction on the size of $F$. Let $F=\left\{ f_{1},f_{2}\ldots,f_{n}\right\} \subseteq\mathcal{H}$. If $F$
has one element, the result follows from the definition of $\mathcal{H}$. Now
let $F_{n}=F_{n-1}\cup\left\{ f_{n}\right\} $ be of size $n$ and assume that the result holds for menus of size $n-1$. Then we have $F_{n-1}\cup\left\{ 0\right\} \sim0$.
Since $f_{n}\in F_{n}\subseteq\mathcal{H}$, it follows that $0\sim\left\{ 0,f_{n}\right\} $.
By \iia, we must have $F_{n}\cup\left\{ 0\right\} =F_{n-1}\cup\left\{ 0,f_{n}\right\} \sim0$,
as we wanted.

\end{proof}
To simplify notation, we now write 
\[
\left\langle \phi,\pi\right\rangle= \int \phi \: d\pi
\]
for any integrable function $\phi$.

\begin{lemma}
The cost function $c$ is given by
\[
c\left(\pi\right)=\sup_{\begin{array}{c}
F\subseteq\mathcal{H}\\
0\in F
\end{array}}\left\langle \phi_{F},\pi\right\rangle .
\]
\end{lemma}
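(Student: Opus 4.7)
The plan is to apply the recovery formula from \Cref{prop:de2017rationally}, which already gives $c(\pi)=\sup_{F\in \mathbb{F}} \langle \phi_F,\pi\rangle - V(\phi_F)$, and to show that one can restrict the supremum to menus $F$ with $0\in F\subseteq \mathcal{H}$ without loss (and that on such menus $V(\phi_F)=0$).

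The easy inequality $\sup_{F\subseteq\mathcal{H},\,0\in F}\langle \phi_F,\pi\rangle \leq c(\pi)$ follows immediately: for any such $F$, \Cref{lem:properties of H} gives $F=F\cup\{0\}\sim 0$, and since $V(\phi_{\{0\}})=\phi_0(p_0)=0$, we have $V(\phi_F)=0$. Thus $\langle \phi_F,\pi\rangle = \langle \phi_F,\pi\rangle - V(\phi_F) \le c(\pi)$.

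For the reverse inequality, given an arbitrary menu $G$, I use \ie to produce an ignorance equivalent $h$ with $h\sim G\sim G\cup\{h\}$, and then translate by $-h$. Define
\[
F = \{g-h : g\in G\}\cup\{0\}.
\]
I need to show $F\subseteq \mathcal{H}$. For each $g\in G$, the sandwich $\{h\}\subseteq \{h,g\}\subseteq G\cup \{h\}$ combined with $G\cup\{h\}\sim h$ forces $\{h,g\}\sim h$. Applying property 2 of rationally inattentive preferences to $V(\phi_{\{0,g-h\}}+\phi_h) = V(\phi_{\{h,g\}})$ yields $V(\phi_{\{0,g-h\}})=V(\phi_{\{h,g\}})-\phi_h(p_0)=0$, i.e. $g-h\in \mathcal{H}$. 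This sandwich argument is the main (and really the only) obstacle; everything else is bookkeeping with property~2 and the fact that $\phi_h$ is affine so $\langle \phi_h,\pi\rangle =\phi_h(p_0)$ for all $\pi\in \Pi(p_0)$.

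With $F$ in hand, I compute
\[
\langle \phi_F,\pi\rangle = \max\bigl(0,\langle \phi_G,\pi\rangle - \phi_h(p_0)\bigr) = \max\bigl(0,\langle \phi_G,\pi\rangle - V(\phi_G)\bigr) \geq \langle \phi_G,\pi\rangle - V(\phi_G),
\]
using $V(\phi_G)=V(\phi_h)=\phi_h(p_0)$. Taking the supremum over $G$ and noting $F\subseteq \mathcal{H}$ with $0\in F$ gives the matching inequality $c(\pi)\leq \sup_{F\subseteq\mathcal{H},\,0\in F}\langle \phi_F,\pi\rangle$, completing the proof.
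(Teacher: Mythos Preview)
Your approach is essentially the paper's: translate an arbitrary menu by its ignorance equivalent to land inside $\mathcal{H}$, then use the recovery formula. The only substantive difference is that the paper shows $V$ of the entire translated menu equals $0$ and then invokes \Cref{lem:properties of H}, whereas you verify membership in $\mathcal{H}$ element by element via the sandwich $\{h\}\subseteq\{h,g\}\subseteq G\cup\{h\}$. Both routes are fine.

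There is, however, a genuine slip in your displayed computation. You write
\[
\langle \phi_F,\pi\rangle = \max\bigl(0,\langle \phi_G,\pi\rangle - \phi_h(p_0)\bigr),
\]
but this equality is false: $\phi_F(p)=\max\bigl(0,\phi_G(p)-\phi_h(p)\bigr)$ pointwise, and the integral of a pointwise maximum is not the maximum of the integrals. What is true (and all you need) is the inequality
\[
\langle \phi_F,\pi\rangle=\int \max\bigl(0,\phi_G-\phi_h\bigr)\,d\pi \;\geq\; \int(\phi_G-\phi_h)\,d\pi=\langle \phi_G,\pi\rangle-\phi_h(p_0)=\langle \phi_G,\pi\rangle - V(\phi_G),
\]
using that $\phi_h$ is affine so $\langle \phi_h,\pi\rangle=\phi_h(p_0)$. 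Replace the equality by this inequality and the argument goes through; the paper makes the same point by writing $\phi_{G\cup h}\geq \phi_G$ together with $V(\phi_G)=V(\phi_{G\cup h})$.
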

\begin{proof}
Let $F$ be any menu. By \ie, there exists an act $h$ such that $h\sim F\sim F\cup h$. Note that, since $\phi_h$ is an affine function, $\phi_{F\cup h}-\phi_h$ is a piecewise linear convex function. Since $u$ is surjective, there is a menu $G$, with $0\in G$, such that $\phi_G=\phi_{F\cup h}-\phi_h$. This menu $G$ satisfies two important properties: First,
\[
V(\phi_G)=V(\phi_{F\cup h}-\phi_h)=V(\phi_{F\cup h})-\phi_h(p_0)=V(\phi_h)-\phi_h(p_0)=0.
\]
This means that $0\sim G\cup 0= G$, so that $G\subseteq \mathcal{H}$.
Second, for any $\pi\in \Pi(p_0)$, we have $\left\langle \phi_{G},\pi\right\rangle =\left\langle \phi_{F\cup h}-\phi_{h},\pi\right\rangle =\left\langle \phi_{F\cup h},\pi\right\rangle -\phi_{h}\left(p_0\right)$ and therefore
\[
\left\langle \phi_{G},\pi\right\rangle -V\left(\phi_G\right)=
\left\langle \phi_{F\cup h},\pi\right\rangle -V\left(\phi_{F\cup h}\right)\geq
\left\langle \phi_{F},\pi\right\rangle -V\left(\phi_F\right),
\]
since $V(\phi_F)=V(\phi_{F\cup h})$ and $\phi_{F\cup h}\geq \phi_F$.
Therefore, by \Cref{prop:de2017rationally},
\[
c\left(\pi\right)=\sup_{
F\in \mathbb{F}}\left\langle \phi_{F},\pi\right\rangle -V\left(\phi_F\right)=\sup_{\begin{array}{c}
G\in \mathbb{F}\\
0\in G\\
V(\phi_G)=0
\end{array}}\left\langle \phi_{G},\pi\right\rangle -V\left(\phi_G\right)
=\sup_{\begin{array}{c}
G\subseteq\mathcal{H}\\
0\in G
\end{array}}\left\langle \phi_{G},\pi\right\rangle ,
\]
since $F\subseteq\mathcal{H}$ and $0\in F$ implies $V(\phi_F)=0$.
\end{proof}

Finally, we can prove the posterior separability of the cost function.
\begin{lemma}
We can write $c\left(\pi\right)=\left\langle \psi,\pi\right\rangle $,
where $\psi:\Delta\left(\Omega\right)\rightarrow\mathbb{R}$ is 
lower-semicontinuous, and given by
\begin{equation}
\psi\left(p\right)=\sup_{h\in\mathcal{H}} \sum_\omega u(h(\omega))p(\omega)\label{eq:formula for psi}
\end{equation}
 where $\mathcal{H}$ is defined as in (\ref{eq: definition of H}).
\end{lemma}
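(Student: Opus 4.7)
My plan is to use the preceding lemma, which expresses the cost as
$c(\pi)=\sup_{F\subseteq\mathcal{H},\,0\in F}\langle\phi_F,\pi\rangle$, and to push the supremum inside the integral by exploiting the finite support of $\pi$. Defining $\psi$ as the pointwise supremum of affine functions associated with $\mathcal{H}$ automatically makes $\psi$ convex, so the real work is just to prove the two inequalities $c(\pi)\leq\langle\psi,\pi\rangle$ and $c(\pi)\geq\langle\psi,\pi\rangle$.

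For the upper bound, I would first observe that $0\in\mathcal{H}$, since $\{0\}\cup\{0\}=\{0\}\sim 0$. Then for any menu $F\subseteq\mathcal{H}$ with $0\in F$, the function $\phi_F(p)=\max_{f\in F}\sum_\omega u(f(\omega))p(\omega)$ is pointwise dominated by $\psi(p)=\sup_{h\in\mathcal{H}}\sum_\omega u(h(\omega))p(\omega)$, simply because the $\max$ over a subset of $\mathcal{H}$ is at most the $\sup$ over all of $\mathcal{H}$. Integrating against $\pi$ and taking the supremum over admissible $F$ gives $c(\pi)\leq\langle\psi,\pi\rangle$ via the previous lemma.

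For the lower bound, I would use the fact that $\pi\in\Delta(\Delta(\Omega))$ has finite support, say $\{p_1,\ldots,p_N\}$. Fix $\varepsilon>0$. For each $j$, by the definition of $\psi(p_j)$ as a supremum, I can pick $h_j^\varepsilon\in\mathcal{H}$ such that $\sum_\omega u(h_j^\varepsilon(\omega))p_j(\omega)>\psi(p_j)-\varepsilon$ when $\psi(p_j)<\infty$, and such that the same inner product exceeds $1/\varepsilon$ when $\psi(p_j)=\infty$. Set $F_\varepsilon=\{0,h_1^\varepsilon,\ldots,h_N^\varepsilon\}$. Since $0\in\mathcal{H}$ and each $h_j^\varepsilon\in\mathcal{H}$, we have $F_\varepsilon\subseteq\mathcal{H}$ with $0\in F_\varepsilon$, so $F_\varepsilon$ is admissible in the supremum of the previous lemma. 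On each support point $p_j$, $\phi_{F_\varepsilon}(p_j)\geq\sum_\omega u(h_j^\varepsilon(\omega))p_j(\omega)$, so integrating yields $c(\pi)\geq\langle\phi_{F_\varepsilon},\pi\rangle>\langle\psi,\pi\rangle-\varepsilon$ in the finite case, and $c(\pi)=+\infty=\langle\psi,\pi\rangle$ in the infinite case. Sending $\varepsilon\to 0$ completes the argument.

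The only mild obstacle is keeping track of the possible value $\psi(p_j)=+\infty$, but the finite-support structure of $\pi$ reduces this to a single bookkeeping case handled by the same finite-menu approximation with unbounded magnitudes. Otherwise the argument is a straightforward sup-integral interchange enabled by the fact that on finitely many support points one can simultaneously approximate each $\psi(p_j)$ with a single act $h_j^\varepsilon\in\mathcal{H}$ and bundle them (together with $0$) into one admissible menu.
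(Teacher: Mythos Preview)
Your proposal is correct and follows essentially the same argument as the paper: both bound $c(\pi)$ above by $\langle\psi,\pi\rangle$ via the pointwise inequality $\phi_F\leq\psi$, and below by choosing near-optimal acts $h_j\in\mathcal{H}$ at each support point of $\pi$ and bundling them with $0$ into an admissible menu. The only cosmetic difference is that in the infinite case the paper uses a simpler two-element menu $\{0,h_n\}$ with $\phi_{h_n}(p_i)\to\infty$, whereas you keep the full menu $F_\varepsilon$ and rely on $\phi_{F_\varepsilon}\geq 0$ (from $0\in F_\varepsilon$) to conclude $\langle\phi_{F_\varepsilon},\pi\rangle\geq \pi(p_j)/\varepsilon\to\infty$; both work.
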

\begin{proof}
For any menu $F\subseteq\mathcal{H}$, we have $\phi_{F}\leqslant\psi$,
so that 
\[
c\left(\pi\right)=\sup_{\begin{array}{c}
F\subseteq\mathcal{H}\\
0\in F
\end{array}}\left\langle \phi_{F},\pi\right\rangle \leqslant\left\langle \psi,\pi\right\rangle .
\]
To show the converse inequality, fix $\epsilon>0$ and $\pi\in\Pi\left(\overline{p}\right)$
with support $p_{1},\ldots,p_{n}$. Suppose first that $\psi(p_i)<\infty$ for $i=1,\ldots, n$. From the definition of $\psi$,
we can find $h_{1},\ldots,h_{n}$ such that 
\[
\psi\left(p_{i}\right)<\left\langle h_{i},p_{i}\right\rangle +\epsilon \text{ for } i=1,\ldots,n.
\]
Letting $F=\left\{ 0,h_{1},\ldots,h_{n}\right\} $ we have 
\[
c\left(\pi\right)\geqslant\left\langle \phi_{F},\pi\right\rangle \geqslant\sum_{i}\left\langle h_{i},p_{i}\right\rangle \pi (p_i)>\left\langle \psi,\pi\right\rangle -\epsilon.
\]
Since $\epsilon$ was chosen arbitrarily, this shows that $c\left(\pi\right)\geqslant\left\langle \psi,\pi\right\rangle $.

Suppose now that $\psi(p_i)=\infty$ for some $i$. Then there must be a sequence $(h_n)_{n=1}^\infty$ of acts in $\mathcal{H}$ such that $\phi_{h_n}(p_i)\rightarrow \infty$. Letting $G_n=\{0,h_n\}$ we have that
\[
c(\pi)\geqslant \sup_{n\in \mathbb{N}}\left\langle \phi_{G_n},\pi\right\rangle\geqslant \sup_{n\in \mathbb{N}}\phi_{h_n}(p_i)\pi(p_i)=\infty.
\]
Lastly, recall that the epigraph of a function $f:\Delta(\Omega)\rightarrow \mathbb{R}$, denoted $\epi(f)$, is 
defined as the set of points $\{(p,z):p\in\Delta(\Omega), z\in \mathbb{R}, z\geq f(p)\}$. The epigraph of $\psi$ is given, from \eqref{eq:formula for psi}, by
\[
\epi(\psi)=\bigcap \epi(u(h))
\]
Since each function $u(h)$ is affine and hence weakly convex, and 
closure is preserved under intersection, the set $\epi(\psi)$ is 
closed as well. By \cite{Rockafellar+1970}, Theorem 7.1, $\psi$
is lower-semicontinuous.
\end{proof}

\subsection{Unique Hyperplane Property implies Joint-Directional Differentiability}\label{proof:UHP_implies_JDD}

     We prove the contrapositive: if $\psi$ does not satisfy JDD, then it does not satisfy UHP. Thus, suppose there are%
     \footnote{Recall that for any $p\in \dom(\psi)$, we define $\bar{p}=T(p)\in Y$ (see \Cref{sec:dimension}).}
     $\{p_i\}_{i=1}^K\subset \dom \psi$, $\delta\in \mathbb{R}^M\setminus \{0\}$, and $\lambda_i\in \mathbb{R}^M$ such that
\begin{enumerate}
    \item $\lambda_i\in \partial \bar{\psi} (\bar{p}_i)$,
    \item $\lambda_i+\delta \in \partial \bar{\psi} (\bar{p}_i)$,
    \item $\delta\cdot (\bar{p}_i-\bar{p}_0)=0$ for all $i$, and
    \item $p_0\in co(p_1,\ldots, p_K)$.
\end{enumerate}
In order to demonstrate the violation of UHP, we will construct a menu
for which it is optimal for the DM to choose information whose support 
is precisely $\{p_i\}_{i=1}^K$, and that the UHP will fail for this menu.

\begin{lemma}\label{lem:optimal_menu}
    Given conditions (1)-(4) above, there exists a menu $H$ such that 
    \begin{enumerate}
        \item there is an optimal distribution over posteriors for $H$ with support on $\{p_i\}_{i=1}^K$;
        \item $0,-\delta\in \Lambda_H$;
        \item $\bar{N}_H\leq 0$ and $V(\phi_H)=0$.
    \end{enumerate}
\end{lemma}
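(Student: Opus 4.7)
The plan is to construct $H$ explicitly from the subgradient data so that $N_H = \phi_H - \psi$ is non-positive on $\dom\psi$, vanishes exactly on $\{p_1,\ldots,p_K\}$, and is simultaneously dominated by the tilted hyperplane of slope $-\delta$. This gives both an optimal distribution supported on $\{p_i\}$ and membership of $\{0,-\delta\}$ in $\Lambda_H$ in one shot.

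For each $i$, I would use condition (1) to form the affine supporting function $\ell_i(p) := \psi(p_i) + \lambda_i \cdot (\bar p - \bar p_i)$ of $\psi$ at $p_i$. The subgradient inequality yields $\ell_i \leq \psi$ on $\dom\psi$ with equality at $p_i$. Since the image of $u$ is $\mathbb R$, I can extend each $\ell_i$ to an affine function on $\Delta(\Omega)$ and realize it as a utility act $f_i$ with $\langle u(f_i),p\rangle = \ell_i(p)$. Setting $H := \{f_1,\ldots,f_K\}$ gives $\phi_H = \max_i \ell_i \leq \psi$ on $\dom\psi$, with $\phi_H(p_i) = \psi(p_i)$, so $N_H \leq 0$ and $N_H(p_i) = 0$ for each $i$. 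Condition (4) produces $\pi^* \in \Pi(p_0)$ supported on $\{p_1,\ldots,p_K\}$; since $\int N_H\, d\pi^* = 0$ while $N_H \leq 0$ everywhere, $\pi^*$ is optimal and $V(\phi_H) = 0$, giving claim (1). The same inequality $N_H \leq 0 = V(\phi_H)$ shows $0 \in \Lambda_H$.

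For $-\delta \in \Lambda_H$, I would first note that since $\bar p_0$ is a convex combination of $\bar p_1,\ldots,\bar p_K$ by (4) and the affinity of $T$, condition (3) forces $\delta \cdot \bar p_0 = 0$, so $\delta \cdot \bar p_i = \delta \cdot \bar p_0$ for every $i$. Applying condition (2), the subgradient inequality for $\lambda_i + \delta$ at $\bar p_i$ gives
\[
\bar\psi(\bar p) \;\geq\; \bar\psi(\bar p_i) + (\lambda_i + \delta)\cdot(\bar p - \bar p_i) \;=\; \ell_i(p) + \delta\cdot(\bar p - \bar p_0)
\]
for all $\bar p \in Y$. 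Maximizing over $i$ yields $\psi(p) \geq \phi_H(p) + \delta\cdot(\bar p - \bar p_0)$, which rewrites as $-\delta\cdot(\bar p - \bar p_0) + V(\phi_H) \geq N_H^*(\bar p)$, i.e., $-\delta \in \Lambda_H$.

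The main obstacle I anticipate is the bookkeeping across the change of coordinates: the slopes $\lambda_i$ live in $\mathbb R^M$ and encode affine functions on $\aff(\dom\psi)$ via $T$, while utility acts induce linear functionals on $\Delta(\Omega) \subset \mathbb R^\Omega$. Care is needed to verify that an affine function on $\aff(\dom\psi)$ extends to an affine function on $\Delta(\Omega)$ realizable as a utility act whose values on $\dom\psi$ are exactly $\ell_i$. Once this identification is handled cleanly, everything else reduces to one-line applications of the two subgradient inclusions together with conditions (3) and (4).
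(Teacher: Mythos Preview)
Your proposal is correct and follows essentially the same route as the paper: you build $H$ from acts realizing the affine supports $\ell_i(p)=\psi(p_i)+\lambda_i\cdot(\bar p-\bar p_i)$, deduce $N_H\le 0$ with equality on $\{p_i\}$ from condition~(1) to get claim~(1) and $0\in\Lambda_H$, and then invoke condition~(2) together with $\delta\cdot(\bar p_i-\bar p_0)=0$ to obtain $-\delta\in\Lambda_H$. The coordinate-change bookkeeping you flag is exactly the point the paper dispatches with the remark that the right-hand side is affine in $p$ and $u$ is surjective, so any affine extension to $\Delta(\Omega)$ is realizable as a utility act.
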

\begin{proof}
    By \Cref{lem:surjectivity}, there exists, for each $i$, an act $h_i:\Omega\rightarrow X$ such that
    \[
    \sum_{\omega\in \Omega}u(h_i(\omega))p(\omega)=\lambda_i \cdot (\bar{p}-\bar{p}_i)+\bar{\psi}(\bar{p}_i)
    \]
    for every $p\in \Delta(\Omega)$. Defining the menu $H=\{h_1,\ldots,h_K\}$, we have
\[
\phi_H(p)=\max_{h_i\in H} \sum_{\omega\in \Omega}u(h_i(\omega))p(\omega)=\max_i \lambda_i \cdot (\bar{p}-\bar{p}_i)+\bar{\psi}(\bar{p}_i).
\]
Spelling out the definition of subdifferential in condition (1), we have $\lambda_i \cdot (\bar{p}-\bar{p}_i)\leq \bar{\psi}(\bar{p})-\bar{\psi}(\bar{p}_i)$ for $i=1,\ldots, K$. This implies that $\phi_H(p)\leq \bar{\psi}(\bar{p})$ for all $p\in \Delta(\Omega)$, which in turn implies $\bar{N}_H\leq 0$. Moreover, $\bar{N}_H(\bar{p}_i)=0$ for $i=1,\ldots,K$. Since $p_0\in co(p_1,\ldots, p_K)$, this implies that $V(\phi_H)=\cav(\bar{N}_H)(\bar{p}_0)=0$, so $0\in \Lambda_H$. Moreover, if we write $p_0=\sum_{i=1}^K \beta_i p_i$ and consider the distribution over posteriors $\pi$ that puts weight $\beta_i$ on $p_i$, we will have that $\pi$ is optimal.

Similarly, using condition (3) and spelling out the definition of subdifferential in condition (2), we have, for $i=1,\ldots, K$,
\begin{align*}
    \lambda_i  \cdot (\bar{p}-\bar{p}_i)+\delta  \cdot (\bar{p}-\bar{p}_0)&=  \lambda_i  \cdot (\bar{p}-\bar{p}_i)+\delta  \cdot (\bar{p}-\bar{p}_0)+\delta \cdot (\bar{p}_0-\bar{p}_i)\\
    &=(\lambda_i +\delta) \cdot (\bar{p}-\bar{p}_i)\\
    &\leq \bar{\psi}(\bar{p})-\bar{\psi}(\bar{p}_i)
\end{align*}
This implies that $\bar{N}_H(\bar{p})\leq -\delta \cdot (\bar{p}-\bar{p}_0)$ for all $\bar{p}\in Y$. Moreover, $\bar{N}_H(\bar{p}_i)= 0=-\delta \cdot (\bar{p}_i-\bar{p}_0)$ for $i=1\ldots, K$. As before, this implies that $-\delta\in \Lambda_H$. Since $\delta \neq 0$, this shows that the unique hyperplane property is not satisfied. 
\end{proof}


\subsection{Independence of Irrelevant Alternatives implies Joint-Directional Differentiability} \label{proof:IIA_implies_JDD}

We now show that, if the preference has a posterior-separable representation and satisfies \iia, the canonical measure of uncertainty $\psi$ will satisfy JDD.

We start with a preliminary lemma regarding the subdifferentials
of sets of points that are NDISD.

\begin{lemma}
\label{compact}
    The set
    \begin{equation}
    \label{deltas}
        D(\{\bar{p}_i\}_{i=1}^K)\coloneqq\bigcap_i \left(\partial \bar{\psi}(\bar{p}_i)-\partial \bar{\psi}(\bar{p}_i)\right)\cap (\bar{p}_i-\bar{p}_0) ^\perp.
    \end{equation}
    is compact and convex. Moreover, $\bar{\psi}$ is non-differentiable in the direction $\delta$ at at $\{\bar{p}_i\}_{i=1}^K$ if and only if $\delta\in D(\{\bar{p}_i\}_{i=1}^K)$.
\end{lemma}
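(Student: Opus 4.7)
The plan is to derive both the characterization and the topological-convex properties of $D(\{\bar{p}_i\}_{i=1}^K)$ by direct manipulation of \Cref{def:NDISD} combined with standard properties of subdifferentials of convex functions.

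The characterization is essentially an unpacking of definitions. In the forward direction, suppose $\bar{\psi}$ is non-differentiable in the direction $\delta$ at $\{\bar{p}_i\}_{i=1}^K$, witnessed by subgradients $\lambda_i \in \partial\bar{\psi}(\bar{p}_i)$ with $\lambda_i + \delta \in \partial\bar{\psi}(\bar{p}_i)$. Then for each $i$ we can write $\delta = (\lambda_i + \delta) - \lambda_i \in \partial\bar{\psi}(\bar{p}_i) - \partial\bar{\psi}(\bar{p}_i)$, while condition (2) of the definition places $\delta \in \bar{p}_i^\perp$. Hence $\delta \in D(\{\bar{p}_i\}_{i=1}^K)$. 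Conversely, if $\delta \in D(\{\bar{p}_i\}_{i=1}^K)$, then for each $i$ I would decompose $\delta = \mu_i - \nu_i$ with $\mu_i, \nu_i \in \partial\bar{\psi}(\bar{p}_i)$ and set $\lambda_i := \nu_i$; this furnishes the required witnesses, with the orthogonality condition encoded directly in the definition of $D$.

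Convexity is immediate: each subdifferential $\partial\bar{\psi}(\bar{p}_i)$ is a convex set by \cite{Rockafellar+1970} Theorem 23.4, so $\partial\bar{\psi}(\bar{p}_i) - \partial\bar{\psi}(\bar{p}_i) = \partial\bar{\psi}(\bar{p}_i) + (-\partial\bar{\psi}(\bar{p}_i))$ is convex as the Minkowski sum of two convex sets; the orthogonal complement $\bar{p}_i^\perp$ is a linear subspace; and intersections of convex sets are convex.

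Compactness is the substantive part. Closedness of each piece follows since subdifferentials and linear subspaces are closed. For boundedness there is a split: when every $\bar{p}_i$ lies in $\mathrm{int}(Y)$, Rockafellar 23.4 tells us each $\partial\bar{\psi}(\bar{p}_i)$ is bounded, hence the self-differences are compact and we are done. The main obstacle will be the case in which some $\bar{p}_i$ lies on the relative boundary of $\dom\bar{\psi}$, because then $\partial\bar{\psi}(\bar{p}_i)$ may be unbounded with recession directions lying in the normal cone $N_Y(\bar{p}_i)$, and the closed Minkowski difference of two unbounded closed convex sets need not even be closed. My plan to handle this is to exploit the orthogonality intersection $\bar{p}_i^\perp$: since $\bar{p}_0 \in Y$, any $v \in N_Y(\bar{p}_i)$ satisfies $v \cdot (\bar{p}_0 - \bar{p}_i) \leq 0$, so a $\delta$ in both $\partial\bar{\psi}(\bar{p}_i) - \partial\bar{\psi}(\bar{p}_i)$ and $\bar{p}_i^\perp$ whose norm can be scaled arbitrarily would have its offending recession direction forced into the lineality space of $N_Y(\bar{p}_i)$ and also be orthogonal to $\bar{p}_i - \bar{p}_0$; combining these constraints across all $i$ (and using the full-dimensionality of $Y$ in $\mathbb{R}^M$ from \Cref{sec:dimension}) should leave no nontrivial escape direction, yielding boundedness and hence compactness.
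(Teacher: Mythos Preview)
Your characterization and convexity arguments are correct and match the paper. The gap is in compactness when some $\bar p_i$ sits on the boundary of $Y$.

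Your boundary-case plan cannot succeed as written because it never uses the hypothesis $p_0\in co(\{p_i\}_{i=1}^K)$, which is implicit in the context where the lemma is invoked and is genuinely needed. Without it, compactness can fail: take $M=2$, $Y=[0,1]^2$, $\bar p_0=(\tfrac12,\tfrac12)$, $\bar\psi$ the indicator of $Y$, and a single point $\bar p_1=(0,0)$. Then $\partial\bar\psi(\bar p_1)=N_Y(\bar p_1)$ is the closed third quadrant, so $\partial\bar\psi(\bar p_1)-\partial\bar\psi(\bar p_1)=\mathbb{R}^2$, and intersecting with $(\bar p_1-\bar p_0)^\perp=\{(a,-a):a\in\mathbb{R}\}$ leaves an unbounded line. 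So ``combining normal-cone directions with orthogonality across all $i$'' does not by itself force boundedness; the convex-hull condition on $p_0$ is doing real work that your outline omits. The closedness worry you raised is also left unresolved by this sketch.

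The paper handles boundedness by a different mechanism. It invokes \Cref{lem:optimal_menu}---which is precisely where $p_0\in co(\{p_i\})$ enters---to construct a menu $H$ for which $\{p_i\}$ are optimal posteriors, and then identifies $D(\{\bar p_i\})=\Lambda_H-\Lambda_H$. Compactness thus reduces to showing $\Lambda_H$ is bounded, which follows directly from $\bar p_0\in\mathrm{int}(Y)$: a recession direction $\delta\neq 0$ of $\Lambda_H$ would allow one to pick $\bar p\in Y$ arbitrarily close to $\bar p_0$ with $\delta\cdot(\bar p-\bar p_0)<0$, forcing $\bar\psi(\bar p)=\infty$ and contradicting $p_0\in\ri(\dom\psi)$. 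This detour through the optimal-hyperplane set $\Lambda_H$ is what lets the paper avoid the Minkowski-difference pathologies you identified.
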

\begin{proof}
See \Cref{app:proof_of_compact}.
\end{proof}
\begin{lemma}
    Suppose $\succsim$ has a posterior separable representation with a $\psi$ that violates JDD. Then $\succsim$ violates \iia.
\end{lemma}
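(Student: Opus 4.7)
I would prove the contrapositive: from the failure of JDD, I would construct two menus $F,G$ that witness a failure of \iia. By hypothesis there exist $\{p_i\}_{i=1}^K\subset\dom\psi$, $\delta\in\mathbb R^M\setminus\{0\}$, and $\lambda_i$ with $\lambda_i,\lambda_i+\delta\in\partial\bar\psi(\bar p_i)$, $\delta\cdot(\bar p_i-\bar p_0)=0$, and $p_0\in co(p_1,\ldots,p_K)$. I would apply \Cref{lem:optimal_menu} to this data to obtain a menu $H$ with $V(\phi_H)=0$ and both $0,-\delta\in\Lambda_H$; concretely, $\phi_H(p)\leq\psi(p)$ and $\phi_H(p)+\delta\cdot(\bar p-\bar p_0)\leq\psi(p)$ pointwise on $\dom\psi$, with an optimal information structure supported on $\{p_i\}$. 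The idea is to enlarge $H$ by one act on each ``side'' of the hyperplane $\{\bar p:\delta\cdot(\bar p-\bar p_0)=0\}$ in a way that keeps $V$ at zero via only one of the two hyperplanes, but whose union strictly raises $V$.

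Since $\bar p_0\in\mathrm{int}(Y)$ (from $p_0\in\ri(\dom\psi)$), I would pick $\epsilon>0$ small enough that $\bar p^*:=\bar p_0+\epsilon\delta$ and $\bar p^{**}:=\bar p_0-\epsilon\delta$ lie in $\mathrm{int}(Y)$, and set $p^*=T^{-1}(\bar p^*)$, $p^{**}=T^{-1}(\bar p^{**})$. Choose $\mu^*\in\partial\bar\psi(\bar p^*)$ and $\mu^{**}\in\partial\bar\psi(\bar p^{**})$, which are non-empty because both base points are interior to $Y$. Define utility acts $a,b$ by
\[
\phi_a(p)=\mu^*\cdot(\bar p-\bar p^*)+\psi(p^*),\qquad \phi_b(p)=(\mu^{**}-\delta)\cdot(\bar p-\bar p^{**})+\psi(p^{**})-\delta\cdot(\bar p^{**}-\bar p_0);
\]
these are realizable as utility acts because the image of $u$ is all of $\mathbb R$. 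Set $F=H\cup\{a\}$ and $G=H\cup\{b\}$; by perturbing $\epsilon$ or the chosen subgradients if needed, arrange $a,b\notin H$ and $a\neq b$, so that $F\cap G=H$.

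The three value checks go as follows. From $\mu^*\in\partial\bar\psi(\bar p^*)$ we have $\phi_a\leq\psi$ on $\dom\psi$, so $\phi_F=\max(\phi_H,\phi_a)\leq\psi$, meaning $0\in\Lambda_F$ with implied value at most $0$; combined with monotonicity $V(F)\geq V(H)=0$, this gives $V(F)=0$. For $V(G)$, a direct expansion using $\mu^{**}\in\partial\bar\psi(\bar p^{**})$ (in which the four $\delta$-linear terms cancel) shows $\phi_b(p)\leq\psi(p)-\delta\cdot(\bar p-\bar p_0)$ on $\dom\psi$; combined with $-\delta\in\Lambda_H$, this gives $-\delta\in\Lambda_G$ and hence $V(G)=0$. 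For $V(F\cup G)>0$, take $\pi^*:=\tfrac12\delta_{p^*}+\tfrac12\delta_{p^{**}}$; by affineness of $T$ and $\bar p^*+\bar p^{**}=2\bar p_0$, we have $\pi^*\in\Pi(p_0)$. Since $\phi_a(p^*)-\psi(p^*)=0$ and $\phi_b(p^{**})-\psi(p^{**})=-\delta\cdot(\bar p^{**}-\bar p_0)=\epsilon|\delta|^2$,
\[
V(F\cup G)\geq \int N^*_{F\cup G}\,d\pi^*\geq \tfrac12\cdot 0+\tfrac12\cdot\epsilon|\delta|^2>0.
\]
Hence $V(F)=V(F\cap G)=V(G)=0<V(F\cup G)$, i.e., $F\sim F\cap G\sim G$ while $F\not\sim F\cup G$, contradicting \iia.

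The part I expect to take the most care is the algebraic verification that $\phi_b\leq\psi-\delta\cdot(\bar p-\bar p_0)$: one must rewrite the inequality so that the $\delta$-linear terms coming from the definition of $\phi_b$ cancel exactly against the shift on the right-hand side, leaving only the defining subgradient inequality at $\bar p^{**}$. Everything else is soft: existence of subgradients at interior points, Bayes-plausibility of $\pi^*$, realizability of $a,b$ as utility acts, and the avoidance of accidental coincidences $a\in H$ or $a=b$ all follow from choosing $\epsilon$ and the subgradients generically within the continuum of possibilities.
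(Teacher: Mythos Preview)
Your proof is correct and follows essentially the same construction as the paper: build $H$ via \Cref{lem:optimal_menu}, append one subgradient-based act on each side of the $\delta$-hyperplane to form $F$ and $G$, and verify $V(F)=V(G)=V(H)=0<V(F\cup G)$. The only cosmetic differences are that you specialize the perturbation direction to $\epsilon\delta$ (the paper allows any small $\epsilon$ with $\delta\cdot\epsilon>0$), you skip the unnecessary extreme-point selection from \Cref{compact}, and you certify $V(F\cup G)>0$ by exhibiting the explicit $\pi^*=\tfrac12\delta_{p^*}+\tfrac12\delta_{p^{**}}$ rather than bounding every admissible hyperplane from below---these are dual versions of the same inequality.
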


We use these properties of the set to construct a violation of IIA, 
generalizing the intuition in the example in Section 3.3. That is, when
JDD fails, one can find two different acts that, when added to the 
original menu, preserve the original respective hyperplanes, and hence
the value. However, when adding \emph{both}, the new concavification 
lies strictly higher. This is in contradiction to IIA, which states that
it must continue to provide the original value of the menu.

\begin{proof}
By Lemma \ref{compact}, the set $D(\{\bar{p}_i\}_{i=1}^K)$ is 
compact, convex, and contains a non-null linear functional. Let $\delta$ be an extreme point in this set. Note that if $\delta\in D(\{\bar{p}_i\}_{i=1}^K)$,
then so is $-\delta$.

Let $H$ be a set of acts as given by \Cref{lem:optimal_menu}, so that $0,-\delta\in \Lambda_H$.
Let $\epsilon\in \mathbb{R}^M$ be a sufficiently small vector such that
\begin{enumerate}
    \item $\bar{p}_0+\epsilon,\bar{p}_0-\epsilon\in Y$;
    \item $\delta \cdot \epsilon>0$.
\end{enumerate} 
Such an $\epsilon$ exists because the orthogonal complement of 
$D(\{p_i\}_{i=1}^K)$ does not have full rank.
Let $\alpha\in \partial \bar{\psi} (\bar{p}_0+\epsilon)$ and $\beta+\delta\in \partial \bar{\psi} (\bar{p}_0-\epsilon)$. Then
\begin{equation*}
    \alpha\cdot (\overline{p} - \overline{p_0} - \epsilon) +  \overline{\psi}(\overline{p_0} +\epsilon) - \overline{\psi}(\overline{p}) \leq 0
\end{equation*}
\begin{equation*}
    (\beta + \delta)\cdot (\overline{p} - \overline{p_0} + \epsilon) +  \overline{\psi}(\overline{p_0} - \epsilon) - \overline{\psi}(\overline{p}) \leq 0
\end{equation*}
We now define two other menus:
\begin{align*}
    F&=\{f_\alpha\}\cup H \\
    G&=\{f_\beta\}\cup H.
\end{align*}
where $f_\alpha$ is defined so that
\[
\sum_{\omega\in\Omega}u(f_\alpha(\omega))p(\omega)=\alpha\cdot (\bar{p}-\bar{p}_0-\epsilon)+\bar{\psi}(\bar{p}_0+\epsilon)
\] holds for every $p\in \dom(\psi)$, and similarly for $f_\beta$,
\[
\sum_{\omega\in\Omega}u(f_\beta(\omega))p(\omega)=\beta\cdot (\bar{p}-\bar{p}_0+\epsilon)+\bar{\psi}(\bar{p}_0-\epsilon)+\delta\cdot \epsilon.
\]
Notice that $0\in \Lambda_H$ and since  $\alpha\cdot (\bar{p}-\bar{p}_0-\epsilon)+\bar{\psi}(\bar{p}_0+\epsilon)-\bar{\psi}(\bar{p})\leqslant 0$, it remains in $\Lambda_F$, and so the concavification still yields $V(\phi_F)=0$ by (\ref{eqn:min_concavification}). Likewise, $-\delta\in \Lambda_H$ and since $\beta\cdot (\bar{p}-\bar{p}_0+\epsilon)+\bar{\psi}(\bar{p}_0-\epsilon)-\bar{\psi}(\bar{p})+\delta\cdot\epsilon\leqslant -\delta \cdot (\bar{p}-\bar{p}_0)$, it remains in $\Lambda_G$, and so the concavification still yields $V(\phi_G)=0$. Thus, all three menus\,---\,$F$,$G$, and $H$\,---\,give the same value of zero. 

Now consider $\gamma\in \mathbb{R}^M$ and $k\in \mathbb{R}$ such that $\gamma\cdot \bar{p}+k\geqslant N_{F\cup G}^* (\bar{p})$ for all $\bar{p}$. We must have 
\begin{align*}
  \gamma\cdot (\bar{p}_0+\epsilon)+k&\geqslant \alpha \cdot (\bar{p}_0+\epsilon-\bar{p}_0-\epsilon)+\bar{\psi}(\bar{p}_0+\epsilon)-\bar{\psi}(\bar{p}_0+\epsilon)=0 \\
  \gamma\cdot (\bar{p}_0-\epsilon)+k&\geqslant \beta \cdot (\bar{p}_0-\epsilon-\bar{p}_0+\epsilon)+\bar{\psi}(\bar{p}_0-\epsilon)+\delta\cdot \epsilon-\bar{\psi}(\bar{p}_0-\epsilon)=\delta\cdot \epsilon>0. 
\end{align*}
Together, these inequalities imply that $\gamma \cdot \bar{p}_0+k\geq \frac{1}{2}\delta\cdot\epsilon>0$, so whatever is the optimal hyperplane for $F\cup G$, it must get a value strictly greater than zero at $\bar{p}_0$, meaning that $V(\phi_{F\cup G})>0$. This shows that our preference violates \iia.
\end{proof}

\subsection{Unique Hyperplane Property implies \iia and \ie}\label{proof:UHP_implies_iia}

We now show that if a preference has a posterior-separable representation and satisfies the Unique Hyperplane Property, then it must satisfy \iia and \ie. Let $F$ and $G$ be menus such that $F\cap G\sim F\sim G$. Assuming that the unique hyperplane property is satisfied, we may write $\Lambda_F=\{\lambda_F\}$, $\Lambda_G=\{\lambda_G\}$, and $\Lambda_{F\cap G}=\{\lambda_{F\cap G}\}$. Since 
$F\cap G\sim F\sim G$, and the optimal posteriors for $F\cap G$ are also 
feasible for $F$ and $G$, the optimal hyperplane for $F\cap G$ must also 
be an optimal hyperplane for both $F$ and $G$. By the unique hyperplane
property, the optimal hyperplanes must therefore identical: $\lambda_F=\lambda_{F\cap G}=\lambda_G$. This implies that for 
all $f\in F\cup G$ and $p\in \Delta(\Omega)$, 
$\lambda_F\cdot (\bar{p}-\bar{p}_0)+V(\phi_F)\geqslant \sum_{\omega} u(f(\omega))p(\omega)-\psi(p)$ so that $\lambda_F\cdot (\bar{p}-\bar{p}_0)+V(\phi_F)\geqslant N_{F\cup G}^*(\bar{p})$. and so 
$\lambda_F$ is the (unique) optimal hyperplane for $F\cup G$.

To show that \ie is satisfied, let $\lambda_F$ be the unique optimal hyperplane 
for menu $F$. We will construct an act $h$ such that $\Lambda_{\{h\}}=\Lambda_{F}=\{\lambda_F\}$ and $V(\phi_F)=V(\phi_h)$ (see \cref{pic:ie}). To see why this is enough, note that we can rewrite \cref{eqn:min_concavification} as

\begin{equation*}
    V(\phi_F)=\min_{\substack{\vphantom{.}\\ \mathllap{\lambda}\in \mathrlap{\mathbb{R}^M }\\ \mathllap{v}\in \mathrlap{\mathbb{R}}}}\{v\:|\,\lambda\cdot (y-\bar{p}_0)+v\geq \bar{N}_F(y),\,\forall y\in Y \}.
\end{equation*}
Thus, if $h$ is constructed as described, we have $\lambda_F \cdot (y-\bar{p}_0))+V(\phi_F)\geq \bar{N}_h(y)$ for all $y\in Y$ and therefore we have $\lambda_F \cdot (y-\bar{p}_0))+V(\phi_F)\geq \max\{\bar{N}_h(y),\bar{N}_F(y)\}=\bar{N}_{F\cup h}(y)$ for all $y\in Y$. This means that $V(\phi_{F\cup h})\leq V(\phi_{F})$ and since the preference is monotone in inclusion, this must actually be an equality, so we have $h\sim F\sim F\cup h $, proving \ie.

To construct such an act $h$, note that we want
\[
\lambda_F \cdot (T(p)-\bar{p}_0))+V(\phi_F)\geq N_h(p)=\sum_{\omega\in \Omega}u(h(\omega))p(\omega)-\psi(p) \quad\forall p\in\dom(\psi).
\]
The left-hand side defines an affine function $A(p)\coloneqq \lambda_F \cdot (T(p)-\bar{p}_0))+V(\phi_F)$ defined on $\dom (\psi)$. By \Cref{lem:surjectivity}, there exists an act $h$ such that $\sum_{\omega\in \Omega}u(h(\omega))p(\omega)=A(p)$ for every $p\in \dom \psi$. Hence, for $p\in \dom \psi$
\begin{align*}
    N_h(p) &=\sum_{\omega\in \Omega}u(h(\omega))p(\omega)-\psi(p)\\
    &\leq \sum_{\omega\in \Omega}u(h(\omega))p(\omega)\\
    &=A(p)\\
    &=\lambda_F \cdot (T(p)-\bar{p}_0))+V(\phi_F)
\end{align*}
which implies that $\lambda_F \in \Lambda_h$ and that $V(\phi_h)=\cav (\bar{N}_h)(\bar{p}_0)\leq V(\phi_F)$. Since $\bar{N}_h(\bar{p}_0)=N_h(p_0)=V(\phi_F)$, we must actually have that $V(\phi_h)=V(\phi_F)$, showing that $h$ is the act we wanted. $\square$
\begin{figure}[t!]
\centering

\begin{tikzpicture}[scale=2, >=stealth, font=\small]

    \def\pzero{3.0}        
    \def\slope{-0.15}      
    \def\hval{2.5}         
    \def\k{0.5}            
    \def\a{1.0}            

    \pgfmathsetmacro{\shift}{(\a*\a)/(4*\k)}
    
    \pgfmathsetmacro{\offset}{\a/(2*\k)}
    \pgfmathsetmacro{\pleft}{\pzero - \offset}
    \pgfmathsetmacro{\pright}{\pzero + \offset}

    
    \tikzset{declare function={
        Hyperplane(\x) = \slope*(\x-\pzero) + \hval;
    }}

    
    \tikzset{declare function={
        Nh(\x) = Hyperplane(\x) - \k*(\x-\pzero)^2;
    }}

    \tikzset{declare function={
        Nf(\x) = Hyperplane(\x) + \a*abs(\x-\pzero) - \shift - \k*(\x-\pzero)^2;
    }}

    \draw[->, gray, thick] (-0.5,0) -- (6.5,0);
    \draw[->, gray, thick] (0,-1.0) -- (0,3.5);

    \draw[yellow!90!black, dashed, very thick] (-0.5, {Hyperplane(-0.5)}) -- (6.0, {Hyperplane(6.0)});
    \node[yellow!90!black, above] at (6.0, {Hyperplane(6.0)}) {$\lambda_F$};

    
    \draw[violet, very thick, samples=100, domain=1.0:5.0] 
        plot (\x, {Nh(\x)});
    \node[violet, right] at (5.0, {Nh(5.0)}) {$\bar{N}_h$};

    \draw[orange, very thick, samples=200, domain=0.0:6.0] 
        plot (\x, {Nf(\x)});
    \node[orange, above right] at (6.0, {Nf(6.0)}) {$\bar{N}_F$};

    
    \fill[orange] (\pleft, {Nf(\pleft)}) circle (1.5pt);
    \fill[orange] (\pright, {Nf(\pright)}) circle (1.5pt);
    \fill[violet] (\pzero, {Nh(\pzero)}) circle (1.5pt);

    \draw[dashed, cyan, thick] (\pzero, {Hyperplane(\pzero)}) -- (\pzero, 0);
    \draw[thick, cyan] (\pzero, 0.1) -- (\pzero, -0.1) node[below] {$p_0$};

\end{tikzpicture}

\caption{Constructing the ignorance equivalent given unique hyperplane}
\label{pic:ie}
\end{figure}

\subsection{Joint-directional Differentiability implies the Unique Hyperplane Property}\label{proof:JDD_implies_UHP}

Suppose the preference has a posterior-separable representation where $\psi$ is joint-directional differentiable. Let $F\coloneqq\{f_i\}_{i=1}^K$ be an arbitrary menu. For each $i$, let $a_i\in \mathbb{R}^M$ and $b_i\in \mathbb{R}$ be such that $\sum_{\omega\in\Omega}u(f_i(\omega))p(\omega)=a_i\cdot \bar{p}+b_i$ for all $p\in \dom \psi$.
Let $\{p_i\}_{i=1}^I$ be the support of an optimal distribution over posteriors for $F$, so $p_0\in co\{p_i\}_{i=1}^I$. We can assume, without loss of generality, that no act in $F$ is optimal for more than one posterior belief; for simplicity, we label the acts so that $f_i$ is optimal for $p_i$ for $i\leq K$ (relabeling any unchosen acts to have $i>K$).
Then there exists $\lambda\in \Lambda_F$ such that, for $i=1,\ldots, I$,
\[
N_F(\bar{p}_i)=a_i\cdot \bar{p}_i+b_i-\bar{\psi}(\bar{p}_i)=\lambda\cdot (\bar{p}_i-\bar{p}_0)+V(\phi_F)
\]
and
\[
a_i\cdot \bar{p}+b_i-\bar{\psi}(\bar{p})\leqslant\lambda\cdot (\bar{p}-\bar{p}_0)+V(\phi_F)
\]
for all $\bar{p}\in Y$. Subtracting the equality from the inequality, we get
\[
(a_i-\lambda)\cdot (\bar{p}-\bar{p}_i)\leqslant \bar{\psi}(\bar{p})-\bar{\psi}(\bar{p}_i),
\]
which means that $\lambda_i \coloneqq a_i-\lambda\in \partial\bar{\psi}(\bar{p}_i)$.
If there existed a second hyperplane $\hat{\lambda}$, we could repeat the same argument, and letting $\delta =\lambda-\hat{\lambda}$,
we would get that 
\[
\lambda_i+\delta=\hat{\lambda}_i\in \partial\bar{\psi}(\bar{p}_i)
\]
\[
\delta\cdot (\bar{p}_i-\bar{p}_0)=\hat{\lambda}_i\cdot (\bar{p}_i-\bar{p}_0)-\lambda_i\cdot (\bar{p}_i-\bar{p_0})
\]
\[
=[\bar{N}_F(\bar{p}_i)-V(\phi_F)]-[\bar{N}_F(\bar{p}_i)-V(\phi_F)]=0
\]
So, $\bar{\psi}$ would be NDISD at $\{\bar{p}_i\}_{i=1}^I$, contradicting 
our assumption. $\square$

\subsection{Uniqueness}

We first prove the uniqueness of $\psi$ in (2). Let $\psi$ be canonical and joint-directionally differentiable and suppose that $\tilde{\psi}$ is also canonical and represents the same preference $\succsim$. By \Cref{prop:de2017rationally}, the canonical cost function $c$ representing $\succsim$ is unique, so we must have $c(\pi)=\int \psi d\pi=\int \tilde{\psi} d\pi$ for every $\pi \in \Pi(p_0)$. Joint-directional differentiability of $\psi$ implies its differentiability in the directions of its effective domain at $p_0$ (since $p_0\in co\{p_0\}$). By \Cref{prop:canonps}, $\psi$ is the unique canonical measure of uncertainty that is a representation for $\psi$, so $\tilde{\psi}=\psi$.

To prove uniqueness in (3), note that, by \Cref{proof:UHP_implies_JDD}, any $\psi$ that satisfies UHP must satisfy JDD, so the uniqueness follows from the same argument above.

\section{Discussion of Main Theorem}

As was noted when we introduced canonical costs, it is without loss 
to write $c(\pi)=\sup_{\psi\in\Psi}\int \psi d\pi$. For this 
discussion, we explore the implications of our axioms when the set 
$\Psi$ is finite.\footnote{Notice that, by definition of supremum, one can 
approximate the cost function $c$ by a finite set 
$\hat{\Psi}\subset \Psi$, defining a cost function 
$\hat{c}(\pi)\equiv\max_{\psi\in\hat{\Psi}}\int \psi d\pi$,
as follows. In a case where $c(\pi)$ is finite, then there exists a 
sufficiently large, finite $\hat{\Psi}\subset \Psi$ such that 
$c(\pi)<\hat{c}(\pi)+\epsilon$. if $c(\pi)=\infty$, then for sufficiently
large $\hat{\Psi}$, $\pi$ is dominated by $\delta_{p_0}$.} Then, we not only have $\psi(p_0)\leq 0$ for all $\psi\in\Psi$, but also have equality for at least one $\psi\in \Psi$. We may also assume that $\Psi$ is minimal.

For any finite menu $H$, let $\pi_H\in \Pi(p_0)$ be an optimal information 
choice given that menu. Define 
$\psi_H\in\arg\max_{\psi\in\Psi}\int\psi d\pi_H$; by the minimality of 
$\Psi$, there will be at least one menu $H$ for which $\psi_H$ is the 
unique maximizer.

\ie, Ignorance Equivalence, comes to rule out that there exist $\psi^*\in\Psi$ such that $\psi^*(p_0)<0$. To see this, suppose that 
such a $\psi^*\in\Psi$ exists. Let $\Psi_0=\{\psi\in \Psi|\psi(p_0)=0\}$, which is nonempty by \Cref{prop:dmr2021}. By the minimality of $\Psi$, there is some menu $F^*$ such that 
$c(\pi_{F^*})=\int \psi^* \:d\pi_{F^*}$ and, for all $\psi\neq \psi^*$, 
$c(\pi_{F^*})>\int \psi \:d\pi_{F^*}$. Let 
$\delta\coloneqq \min\{-\psi^*(p_0),\min_{\psi\neq \psi^*} c(\pi_{F^*})-\int \psi \:d\pi_{F^*}\}>0$. 

For any $h$ such that $h\sim F^*$, one will have 
$\pi_h=\delta_{p_0}$ and hence $\psi_h\in\Psi_0$. Consider now the menu 
$F^*\cup \{h\}$. By choosing 
$\pi^\prime \coloneqq\frac{1}{2}\pi_{F^*}+\frac{1}{2}\delta_{p_0}$, 
one can save on information costs: from the fact that 
$\psi^*\notin \Psi_0$ and the representation of $c$ is convex as in 
\Cref{prop:dmr2021}, there exists $\psi^\prime\in \Psi$ such that 
\[
c(\pi^\prime)=\int \psi^\prime d\pi^\prime
\]
\[
\leq \frac{1}{2}\int \psi^*d\pi_{F^*}+\frac{1}{2}\psi_h(p_0)-\frac{1}{2}\delta
\]
\begin{equation}
\label{iecostsave}
\leq \frac{1}{2}c(\pi_{F^*})-\frac{1}{2}\delta
\end{equation}
However, by the monotonicity of the indirect utility from the decisions in
the menu size, due to the preference for flexibility,
\begin{equation}
\label{ieusame}
    \int \phi_{F^*\cup\{h\}} d\pi^\prime\geq \frac{1}{2}\int \phi_{F^*}d\pi_{F^*}+\frac{1}{2}\phi_h(p_0),
\end{equation}
Therefore, 
\[
V(\phi_{F^*\cup\{h\}})\geq \int \phi_{{F^*\cup\{h\}}}d\pi^\prime -c(\pi^\prime)
\]
\[
\geq \frac{1}{2}\int \phi_{F^*}d\pi_{F^*}+\frac{1}{2}\phi_h(p_0)-c(\pi^\prime)
\]
\[
\geq \frac{1}{2}[\int \phi_{F^*}d\pi_{F^*}-\int \psi^*d\pi_{F^*}]-\frac{1}{2}[\phi_h(p_0)]+\frac{1}{2}\delta
\]
\[
=V(\phi_{F^*})+\frac{1}{2}\delta
\]
where the last equality is from $h\sim F^*$. So, $F^*\cup \{h\}\succ F^*$, violating \ie.

On the other hand, there is no problem with \ie if $\Psi=\Psi_0$. 
In this case, for a given $\psi_{F^*}$, the act $h$ is consistent with 
$\psi_{F^*}$ itself, and so there are no savings of information costs to 
be had by randomizing.

The presence of \ie allows for  \iia to have bite. Without \ie, 
it is unclear whether there will actually be any menus $F,G$ with 
$\psi_F\neq\psi_G$ such that $F\cap G\neq \emptyset$; if that were so, 
then in such cases, \iia holds vacuously for $F,G$. However, with \ie,
we know that $\psi_F(p_0)=\psi_G(p_0)=0$. By \ie, there exist $h,h^\prime$
such that $h\sim F\sim F\cup\{h\}$ and 
$h^\prime\sim G\sim G\cup\{h^\prime\}$. Since adding an affine function 
$\alpha$ to the payoffs of all acts in the menu does not change the 
optimal choice of information, and changes the value of the menu by
$\alpha\cdot p_0$, one can replace $G\cup\{h^\prime\}$ with 
$\hat{G}\cup \{h\}$ by letting $\alpha=h-h^\prime$, such that 
$\psi_G=\psi_{\hat{G}}$ and $F\sim \hat{G}$. Thus it is without loss
to consider $G=\hat{G}$. As a result, under \ie, if there are 
$\psi_F\neq \psi_G$, then it is there are corresponding $F,G$ such 
$h\sim F\sim F\cup \{h\}\sim G\cup \{h\}\sim G$. Note that 
$\psi_F=\psi_{F\cup\{h\}}$ and $\psi_G=\psi_{G\cup\{h\}}$ since 
$\pi_F$ and $\pi_G$ are optimal information choices for $F\cup\{h\}$
and $G\cup\{h\}$, respectively.

Given such $F,G$, then by taking $F\cup G\cup \{h\}$, \iia dictates that 
$F\sim F\cup G\cup \{h\}$. However, if $\psi_F\neq\psi_G$, this is not 
the case: one can save on the information cost by randomizing 
$\frac{1}{2}\pi_F+\frac{1}{2}\pi_G$, while keeping the expected utility
from the decisions the same, similarly to the cost 
saving/expected utility preservation due to \ie in equations
(\ref{iecostsave}) and (\ref{ieusame}). So, $\psi_F=\psi_G$.

\section{Uniform Posterior-Separable Costs}

We now allow for the agent's prior to vary. For each possible prior $p_0\in\Delta(\Omega)$, we denote the cost of information by $c_{p_0}:\Pi(p_0)\rightarrow \mathbb{R}\cup \{\infty\}$. For each $\pi\in \Delta(\Delta(\Omega))$, we can deduce the corresponding prior $p_0$ by taking the expectation of $\pi$, so that $\pi\in \Pi(p_0)$. Thus, we may write the cost function for all priors under a single notation $c:\Pi\rightarrow \mathbb{R}$. We call $c$ canonical if, for each $p_0$, $c_{p_0}$ is canonical. We now consider the following condition that allows for the costs to be ``the same" across menus.

\begin{definition}\label{def:prior_invariant}
    The canonical cost function $c:\Delta(\Delta(\Omega))\rightarrow \mathbb{R}\cup \{\infty\}$ is \emph{prior invariant} with respect to the set $\Psi$ of convex functions that \emph{represents} $c$ if
    \[
    c(\pi)=\sup_{\psi\in \Psi} \int \psi \:d\pi-\sup_{\psi'\in \Psi} \psi'\left(\int p \:d\pi\right).
    \]
    When $\Psi$ is a singleton, we say that $c$ is \emph{uniformly posterior separable}.
\end{definition}

Note that if $c$ is canonical then, by \Cref{prop:dmr2021}, for every $p_0$ there exists a set of convex functions $\Psi_{p_0}$ such that 
\[
c(\pi)=\sup_{\psi\in\Psi_{p_0}} \int \psi \: d\pi=\sup_{\psi\in \Psi_{p_0}} \int \psi \:d\pi-\max_{\psi'\in \Psi_{p_0}} \psi'(p_0).
\]
Prior-invariance imposes that the same set $\Psi$ applies to every $p_0$. Note that we need to subtract the second term to guarantee that $c_{p_0}$ is grounded for every $p_0$.

The quintessential uniformly posterior-separable cost is mutual information. Denoting the entropy of a distribution by $H(p)=-\sum_{\omega}p(\omega)\ln(p(\omega))$, mutual information is given by the expected reduction in entropy,
\[
c_{p_0}(\pi)=H(p_0)-\int H(p) \:\pi(dp),
\]
which matches the formula in \Cref{def:prior_invariant} by setting $\Psi=\{-H\}$.

\begin{theorem}
For a fixed prior $p_0$, let $\Psi$ be a set of convex functions representing $c_{p_0}$ satisfying the conditions of \Cref{prop:dmr2021}. Let the canonical cost function $c$ be prior-invariant with respect to $\Psi$. If, for each prior, the preference over menus corresponding to $c$ satisfies \iia and \ie, then $c$ is uniformly posterior separable and represented by some $\psi:\Delta(\Omega)\rightarrow\mathbb{R}\cup \{\infty\}$ that is differentiable on $ \ri(\dom(\psi))$.
\end{theorem}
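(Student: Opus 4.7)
The strategy is to invoke \Cref{possep} twice: once at the distinguished prior $p_0$ to force $|\Psi|=1$, and once at an arbitrary interior prior to extract differentiability. Applied at $p_0$, \iia and \ie imply that $c_{p_0}$ is posterior-separable, so $\pi \mapsto c_{p_0}(\pi)$ is linear on the convex set $\Pi(p_0)$. Combined with $c_{p_0}(\pi)=\sup_{\psi\in\Psi}\int\psi\,d\pi$ and the minimality from \Cref{prop:dmr2021}, this forces $|\Psi|=1$. Indeed, minimality delivers, for each $\psi_i\in\Psi$, a distribution $\pi_i\in\Pi(p_0)$ at which $\psi_i$ is the unique maximizer (otherwise $\psi_i$ could be dropped from $\Psi$ without changing the supremum). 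If $\psi_1\neq\psi_2$ were distinct elements with unique maximizers $\pi_1,\pi_2$, then at $\pi_{1/2}\coloneqq\tfrac12(\pi_1+\pi_2)$ any sup-attainer $\psi^*$ would have to satisfy $\int \psi^*\,d\pi_{1/2}=c_{p_0}(\pi_{1/2})=\tfrac12 c_{p_0}(\pi_1)+\tfrac12 c_{p_0}(\pi_2)$ by linearity, forcing $\int \psi^*\,d\pi_i=c_{p_0}(\pi_i)$ for $i=1,2$\,---\,which by uniqueness of the maximizer at each $\pi_i$ gives $\psi^*=\psi_1=\psi_2$, a contradiction. Writing $\Psi=\{\psi\}$, the prior-invariance formula of \Cref{def:prior_invariant} reduces to $c(\pi)=\int\psi\,d\pi-\psi(\int p\,d\pi)$, which is exactly uniform posterior separability with representing function $\psi$.

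For differentiability, fix $p'\in\ri(\dom(\psi))$ and apply \Cref{possep} at prior $p'$. Since \iia and \ie hold there, $c_{p'}$ admits a canonical measure of uncertainty $\hat\psi_{p'}$ satisfying JDD; taking $K=1$ and $p_1=p'$ in \Cref{def:NDISD} collapses JDD to the statement that $\partial\hat\psi_{p'}(p')$ is a singleton, so $\hat\psi_{p'}$ is differentiable at $p'$ in the directions of $\dom(\psi)$. On the other hand, the uniform representation gives $c_{p'}(\pi)=\int\psi\,d\pi-\psi(p')$, so $h(p)\coloneqq \hat\psi_{p'}(p)-\psi(p)+\psi(p')$ satisfies $h(p')=0$ and $\int h\,d\pi=0$ for every $\pi\in\Pi(p')$. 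I would next argue that $h$ is affine on $\dom(\psi)$: two-point distributions $\alpha\delta_p+(1-\alpha)\delta_{(p'-\alpha p)/(1-\alpha)}\in\Pi(p')$ yield antisymmetry $h(2p'-p)=-h(p)$; three-point distributions on $\{p,q,2p'-\tfrac{p+q}{2}\}$ with weights $(\tfrac14,\tfrac14,\tfrac12)$ yield the midpoint identity $h(\tfrac{p+q}{2})=\tfrac12(h(p)+h(q))$; and continuity of $h$ on $\ri(\dom(\psi))$ (as a difference of convex functions that are finite there) combines with midpoint linearity to give $h(p)=\lambda\cdot(p-p')$ for some $\lambda$. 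Substituting back, $\hat\psi_{p'}(p)=\psi(p)-\psi(p')+\lambda\cdot(p-p')$, whose subdifferential at $p'$ is $\partial\psi(p')+\lambda$; since this is a singleton, so is $\partial\psi(p')$, and $\psi$ is differentiable at $p'$.

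The main obstacle is the affinity step for $h$\,---\,turning the integral orthogonality $\int h\,d\pi=0$ over $\Pi(p')$ into pointwise affinity of $h$ on $\dom(\psi)$. The hypothesis $p'\in\ri(\dom(\psi))$ is essential here: it guarantees that enough two- and three-point distributions with atoms in $\dom(\psi)$ lie in $\Pi(p')$ to probe $h$ along every direction of the affine hull of $\dom(\psi)$, so that continuity combined with midpoint linearity yields affinity on the whole domain. A secondary point that warrants care is the interpretation of minimality in the singleton argument as producing a strict (unique) maximizer for each $\psi_i\in\Psi$; this is the natural reading of the condition in \Cref{prop:dmr2021} but deserves a brief verification.
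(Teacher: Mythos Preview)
Your approach is correct and matches the paper's proof: both invoke \Cref{possep} at $p_0$ to obtain posterior separability and force $|\Psi|=1$ via minimality, read off uniform posterior separability from the prior-invariance formula, and then invoke \Cref{possep} again at an arbitrary $q\in\ri(\dom\psi)$ to extract differentiability from the $K=1$ instance of JDD. Your affinity-of-$h$ sketch is exactly the content of the paper's \Cref{lem:affine_rep}, which is proved in the appendix by essentially the same two- and three-point distribution argument and simply cited in the proof rather than reproved inline.
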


\begin{proof}
    By \Cref{possep}, the cost function $c_{p_0}$ is posterior separable. Since $\Psi$ is minimal for $c_{p_0}$, we must have $\Psi=\{\psi\}$ for some convex $\psi:\Delta(\Omega)\rightarrow \mathbb{R}\cup \{\infty\}$. Since $c$ is prior-invariant, we must have that, for all priors $q\in \Delta(\Omega)$ and $\pi\in \Pi(q)$,
    \[
    c_{q}(\pi)=\sup_{\psi'\in \Psi}\int \psi'(p)d\pi-\sup_{\psi''\in \Psi} \psi''\left(q\right)=\int \psi(p)d\pi-\psi(q),
    \]
    so $c$ is uniformly posterior separable.

    To see that $\psi$ must be differentiable, let $q\in \ri(\dom(\psi))$ and fix the prior to be $q$.  By \Cref{possep}, the preference $\succsim$, satisfies \iia and \ie, so it has a posterior separable representation with a canonical measure of uncertainty $\hat{\psi}$ that satisfies joint-directional differentiability. Thus for every $\pi\in \Pi(q)$, 
    \[
    \int \hat{\psi}\: \pi(dp)=\int \psi\: d\pi-\psi(q)=\int [\psi(p)-\psi(q)]\: \pi(dp)= \int \tilde{\psi}(p)\:\pi(dp).
    \]
    where $\tilde{\psi}(p)=\psi(p)-\psi(q)$. By \Cref{lem:affine_rep} in \Cref{app:proof_of_canonps}, there exists $\xi$ such that
    \[
    \hat{\psi}(p)=\tilde{\psi}(p)+\xi\cdot (p-q).
    \]
    Since $\hat{\psi}$ satisfies joint-directional differentiability, it must be differentiable at $q$ (simply pick $K=1$ and $\{p_i\}_{i=1}^K=\{q\}$ and we trivially have $q\in co(\{q\})$). This implies that $\tilde{\psi}$ is differentiable at $q$ and so is $\psi$.
\end{proof}

\section{Related Literature and Discussion}

Posterior-separable costs have been widely used, primarily because the optimal hyperplane formulation allows practical solutions. For instance,
\cite{lipnowski2020attention} considers the distortion of attention when initial 
information is provided to the decision maker. Several models consider mechanism design with inattentive buyers, including 
 \cite{mensch2022screening}, \cite{yoder2022designing}, and 
\cite{gleyze2023informationally}. \cite{yang2020optimality} considers 
the optimal structure of a security for sale when the buyer is rationally 
inattentive. Lastly, \cite{hebert2023information} considers a rational expectations model with inattentive agents. 
Our behavioral characterization of posterior separable costs, which leads to a unique optimal hyperplane, provides a different perspective on the assumptions behind these applications.

There is also a literature that investigates whether a finite dataset is consistent with costly information acquisition through
revealed preference, in the spirit of \cite{afriat1967construction}. This literature includes \cite{caplin2015revealed}, \cite{chambers2020costly}, and \cite{mensch2024posterior}, but the closest to our paper is  
\cite{denti2022posterior}, which tests for posterior 
separable costs.\footnote{This literature assumes as given the state-dependent stochastic choice implied by the rational inattention model on a finite number of menus. In theory, to perfectly identify such distributions, one would need an infinite number of observations of choices over the same menu. In practice, one repeats each choice a finite number of times and takes the empirical distribution, assuming it is a good enough approximation.} Although his representation is almost the same as ours (with joint-directional differentiability being the only difference), the exercise in this paper is of a different nature. Here, we take as given the preference over all menus of acts, as in \cite{de2017rationally}. This richer domain allows us to isolate the behavioral implications of the model, and to study identification and uniqueness issues.



The search for a unique representation naturally led to our definition of a \emph{canonical} measure of uncertainty, including the new condition that the prior lie in the relative interior of its effective domain (see the discussion following \Cref{def:canonical}). However, note that imposing that the measure of uncertainty $\psi$ be canonical, by itself, is not enough to achieve uniqueness\,---\,if $\psi$ is not differentiable at $p_0$, it is possible to find an $\xi\in \mathbb{R}^\Omega\setminus\{0\}$ such that $\psi(p)+\xi\cdot (p-p_0)$ is still canonical; this change in $\psi$ does not alter the overall cost of information $c(\pi)$ (and hence does not alter the preference; see also \Cref{lem:affine_rep} in \Cref{app:proof_of_canonps}). So, to uniquely identify the measure of uncertainty, one also needs differentiability at $p_0$, which in our case is implied by joint-directional differentiability.

\bibliographystyle{jpe.bst}
\bibliography{PSmenucosts}

\newpage
\appendix
\section{Appendix}
\subsection{Proof of \Cref{prop:dmr2021}}\label{app:proof_dmr2021}

By \cite{de2017rationally}, Theorem 2, the cost of information
    can be written as 
    \[
    c(\pi)=\sup_{F\in\mathbb{F}} \int \phi_F(p)d\pi(p)-V(\phi_F)
    \]
Notice that for each $F$, $\phi_F-V(\phi_F)$ is a convex function, and so one can let $\Psi=\{\phi_F-V(\phi_F):\,F\in\mathbb{F}\}$ to establish 
\[
c(\pi)=\sup_{\psi\in \Psi} \int \psi d\pi
\]
Write $\Psi'\sim \Psi$ to denote the equivalence relation that says that $\Psi'$ and $\Psi$ represent the same cost function $c$. We now show that there exists a $\Psi'\sim \Psi$ satisfying properties 1 and 2 in the proposition. To do so, we divide in two cases:
\begin{enumerate}
    \item Suppose that there exists a finite $\Psi'\sim \Psi$. If $\Psi'$ is minimal, we are done; if not, there exists a $\Psi''$ with $|\Psi''|<|\Psi|$ such that $\Psi''\sim \Psi' \sim \Psi$. We can then repeat the same argument, until we reach a minimal $\Psi^*$, which must happen in a finite number of steps since $\Psi'$ was finite. Since $\Psi^*$ is finite, the maximum must always be achieved, and in particular $c(\delta_{p_0})=\max_{\psi\in \Psi^*}\psi(p_0)=0$.

    \item Suppose that there is no finite $\Psi'\sim \Psi$. Then $\Psi$ is already minimal. To satisfy condition (1), note that, since $c$ is grounded,
    \[
    0=c(\delta_{p_0})=\sup_{\psi\in \Psi} \psi(p_0),
    \]
    and since $c$ is also Blackwell monotone, $c(\pi)\geqslant 0$ for all $\pi\in \Pi(p_0)$. This means that if we add $\psi_0\equiv 0$ to $\Psi$, we get 
    \[
    \sup_{\psi\in \Psi\cup \{\psi_0\}} \int \psi \:d\pi=\max\left\{ \sup_{\psi\in \Psi} \int \psi \:d\pi,0\right\}=\max\{c(\pi),0\}=c(\pi)
    \]
    so $\Psi\cup \{\psi_0\}\sim \Psi$, and
    \[
    c(\delta_{p_0})=\psi_0(p_0)=0=\max_{\psi\in \Psi\cup \{\psi_0\}} \psi(p_0),
    \]
    which is condition (1). Since $|\Psi\cup \{\psi_0\}|= |\Psi|=\infty$, condition (2) is also satisfied by $\Psi\cup \{\psi_0\}$.
\end{enumerate}

\subsection{Proof of \Cref{prop:canonps}}\label{app:proof_of_canonps}
Let $c$ be canonical and represented by $\psi$. 

\noindent (\ref{psi_convex}) It follows from Blackwell monotonicity that $\psi$ must be convex (see \cite{lipnowski2022predicting}, Lemma 24). 

\noindent (\ref{psi_grounded}) From groundedness, it is immediate that $\psi(p_0)=0$. 

\noindent (\ref{ri_prior}) We now construct a $\hat{\psi}$ that represents the same cost of information as $\psi$, satisfying $p_0\in \ri (\dom \hat{\psi})$. If $p_0\in \ri (\dom \psi) $, we simply let $\hat{\psi}=\psi$. So suppose $p_0\notin \ri (\dom \psi) $ and let $M$ denote the dimension of $\aff(\dom\psi)$. Since $p_0$ is on the boundary of $\dom\psi$, by 
    \cite{Rockafellar+1970}, Theorem 11.6, there exists a supporting
    hyperplane, defined by some vector $\lambda$, such that, letting
    \[
    X\coloneqq\{p\in \Delta(\Omega):\,\lambda\cdot p\geq \lambda \cdot p_0\}
    \]
    we have that for all $p\in X$, 
    \[
    \lambda\cdot p>\lambda \cdot p_0\implies \psi(p)=\infty,
    \]
    while $X\cap \ri(\dom(\psi))=\emptyset$, i.e. $X$ only intersects with $\dom(\psi)$ on the boundary of the latter.

Now let 
\begin{equation*}
    \hat{\psi}(p)=
    \begin{cases}
      \psi(p), & \text{if}\ p\in X \\
      \infty, & \text{otherwise}
    \end{cases}
  \end{equation*}
That is, $\hat{\psi}$ replaces $\psi(p)$ with $\infty$ for all $p\notin X$. In particular, $\hat{\psi}(p)=\infty,\,\forall p\in\ri(\dom(\psi))$.
We shall show that $c(\pi)=\int \psi \:d\pi=\int \hat{\psi} \:d\pi$ for every $\pi\in \Pi(p_0)$. Indeed, if $\mbox{supp}(\pi)\subset X$, this is obviously true. When $\mbox{supp}(\pi)\not\subset X$,
by Bayes' rule then, with positive probability according to $\pi$,  there is a $p\in \mbox{supp}(\pi)$ such that $\lambda\cdot p>\lambda\cdot p_0$, and therefore $\psi(p)=\infty$. 
Consequently, $c(\pi)=\infty = \int \hat{\psi}\: d\pi$. 

If $p_0\in \ri(\dom(\hat{\psi}))$, we found our desired $\hat{\psi}$. Otherwise, note that $\dim(\dom(\hat{\psi}))<\dim(\dom(\psi))$ and repeat the procedure until $p_0\in \ri(\dom(\hat{\psi}))$ or the dimension is zero (in which case the effective domain is just $\{p_0\}$ and the cost of information was trivial).

\noindent (\ref{psi_positive}) Let $T:\aff(\dom \hat{\psi})\rightarrow \mathbb{R}^M$ be a bijective affine transformation as in \Cref{sec:dimension}. Since $\hat{\psi}$ is convex, so is $T^*\hat{\psi}$, so there exists a $\lambda\in \partial T^*\hat{\psi}$, that is, $\lambda \cdot (\bar{p}-\bar{p}_0)\leqslant T^*\hat{\psi}(\bar{p})=\hat{\psi}(p)$ for all $p\in \dom \hat{\psi}$. Let $\tilde{\psi}(p)=\hat{\psi}(p)-\lambda\cdot (\bar{p}-\bar{p}_0)$. Then $\tilde{\psi}\geqslant 0$ and it inherits properties (\ref{psi_convex}), (\ref{psi_grounded}), and (\ref{ri_prior}) from $\hat{\psi}$, so $\tilde{\psi}$ is canonical. Also note that, for all $\pi\in \Pi(p_0)$, 
\[
\int \tilde{\psi}\: d\pi= \int \left[ \hat{\psi}(p)-\lambda \cdot (T(p)-T(p_0))\right]\: \pi(dp)=\int \hat{\psi}\: d\pi-\lambda \cdot (T(p_0)-T(p_0))=c(\pi).
\]

\noindent(\ref{psi_lsc}) As $\hat{\psi}=\psi$ on $\aff(\dom\hat{\psi})$ and $\infty$ otherwise, the lower-semicontinuity of $\hat{\psi}$ follows from the lower-semicontinuity of $\psi$
restricted to $\aff(\dom\hat{\psi})$. Since lower-semicontinuity
is preserved by affine transformation, $\psi^\prime$ is lower-semicontinuous as well.

Before proving the uniqueness of the canonical representation, we prove the following lemma. 

\begin{lemma}\label{lem:affine_rep}
    If $\psi$ and $\psi'$ be convex measures of uncertainty satisfying
    \begin{enumerate}
        \item $\psi(p_0)=\psi'(p_0)=0$;
        \item $p_0\in \ri\dom \psi \cap \ri\dom \psi'$;
        \item $\int \psi \:d\pi=\int \psi' \:d\pi$ for all $\pi\in\Pi(p_0)$.
    \end{enumerate}
    Then there exists a $\xi\in \mathbb{R}^{\vert \Omega\vert}$ such that $\psi(p)=\psi'(p)+\xi\cdot (p-p_0)$ for all $p\in \Delta(\Omega)$.
\end{lemma}
\begin{proof}
    We first prove that $\dom \psi=\dom \psi'$. Let $p\in \dom \psi$ be arbitrary. Then $p_0+t(p-p_0)\in \aff (\dom \psi)$ for all $t\in \mathbb R$. Since $p_0\in \ri (\dom\psi)$, we can find an $\epsilon>0$ small enough that $p'\coloneqq p_0-\epsilon(p-p_0)\in \dom \psi$. Then $p_0$ is in the convex hull of $\{p,p'\}$, which means we can find a $\pi\in \Pi(p_0)$ with support $\{p,p'\}\subset \dom \psi$. Then $c(\pi)=\int \psi \: d\pi<\infty$, which means that $\int \psi' \: d\pi<\infty$ as well, which can only happen if $p\in \dom \psi'$. Hence $\dom \psi\subset\dom \psi'$ and, by symmetry, $\dom \psi=\dom \psi'$. This proves that $\psi(p)=\psi'(p)+\xi\cdot (p-p_0)$ for all $p\notin \dom \psi$ regardless of $\xi$ since the finite term $\xi\cdot (p-p_0)$ becomes irrelevant. 

    Now, for $p\in \dom \psi$, let $\zeta(p)=\psi(p)-\psi'(p)$. Since $\psi(p_0)=\psi'(p_0)=0$, $\zeta(p_0)=0$. By Theorem 1.5 in \cite{Rockafellar+1970}, the proof will be finished if we show that $\zeta$ is an affine function. To that end, let $\alpha\in[0,1]$ and $x,y\in \dom\psi$. Suppose first that $\alpha x+(1-\alpha)y=p_0$. Then, letting $\pi\in \Pi(p_0)$ have support $\{x,y\}$, we get
    \[
    0=\int \zeta \:d\pi= \alpha \zeta(x)+(1-\alpha)\zeta(y).
    \]
    Since $\zeta(\alpha x+(1-\alpha)y)=\zeta(p_0)=0$, we have $\zeta(\alpha x+(1-\alpha)y)=\alpha \zeta(x)+(1-\alpha)\zeta(y)$.
    
    Now suppose that $\alpha x+(1-\alpha)y\neq p_0$. Let $\epsilon>0$ be small enough that $z=p_0-\epsilon(\alpha x+(1-\alpha)y-p_0)\in \dom \psi$ (recall that $p_0\in \ri\dom \psi).$ Letting $t=(1+\epsilon)^{-1}\in (0,1)$, we have that $p_0=tz+(1-t)(\alpha x+(1-\alpha)y)$. Thus there is a $\pi\in \Pi(p_0)$ with support on $\{z,\alpha x+(1-\alpha)y\}$ so that 
    \begin{equation}\label{eq:zeta_1}
    0=\int \zeta \:d\pi=t\zeta(z)+(1-t)\zeta(\alpha x+(1-\alpha)y).        
    \end{equation}
    Similarly, we may write $p_0=tz+(1-t)\alpha x+(1-t)(1-\alpha)y$ so there is a $\pi'\in \Pi(p_0)$  with support $\{x,y,z\}$ putting probability $t$ on $z$. Hence
    \begin{equation}\label{eq:zeta_2}
    0=\int \zeta \:d\pi' =t\zeta(z)+(1-t)\alpha \zeta(x)+(1-t)(1-\alpha)\zeta(y).        
    \end{equation}
    Putting together equations \cref{eq:zeta_1} and \cref{eq:zeta_2}, we get that $\zeta(\alpha x+(1-\alpha)y)=\alpha \zeta(x)+(1-\alpha)\zeta(y)$. Thus, we have shown that $\zeta$ is affine, which finishes the proof. 
\end{proof}

Finally, if $\psi$ is differentiable in the directions of its effective domain at $p_0$, so are $\hat{\psi}$ and $\tilde{\psi}$, by construction. If $\psi'$ were another canonical measure of uncertainty representing the same cost of information, then, by \Cref{lem:affine_rep}, there would be a $\xi$ such that $\tilde{\psi}(p)=\psi'(p)+\xi\cdot (p-p_0)$. Also, by the proof of \Cref{lem:affine_rep}, $\tilde{\psi}$ and $\psi'$ must have the same domain, so we may use the same bijective affine transformation $T$ for both. Since $T^*\psi'(\bar{p})\geqslant 0$ and $T^*\tilde{\psi}(\bar{p}_0)=0$, this would imply that $T^*\tilde{\psi}(\bar{p})-T^*\tilde{\psi}(\bar{p}_0)\geqslant \bar{\xi}\cdot (\bar{p}-\bar{p_0})$, or $\bar{\xi}\in \partial T^*\tilde{\psi}$. But since $\tilde{\psi}$ is differentiable in the directions of its effective domain at $p_0$ and and $T^* \tilde{\psi}(\overline{p}) - T^* \tilde{\psi}(\overline{p_0}) = T^* \tilde{\psi}(\overline{p}) \geq 0$,  $\bar{\xi}=0$ must be the only element in the subdifferential, meaning that $T^*\tilde{\psi}=T^*\psi'$, which implies that $\tilde{\psi}=\psi'$. 
\subsection{Proof of \Cref{compact}}\label{app:proof_of_compact}

    Notice that $0\neq \delta \in D(\{\bar{p}_i\}_{i=1}^K)$ if and only if $\delta\in \partial \bar{\psi}(\bar{p}_i)-\partial \bar{\psi}(\bar{p}_i)$ and $\delta \cdot (\bar{p}_i-\bar{p}_0)=0$ for $i=1,\ldots,K$, which is equivalent to the definition of  $\psi$ being non-differentiable in the direction $\delta$. By \citet[Theorem 23.2]{Rockafellar+1970}, $\partial\bar{\psi}(\bar{p}_i)$ 
    is closed and convex; since this is preserved under subtraction 
    and intersection, so is $D(\{\bar{p}_i\}_{i=1}^K)$. To show that $D(\{\bar{p}_i\}_{i=1}^K)$ is compact, it remains to show that $\partial\bar{\psi}(\bar{p}_i)$ is bounded.
    By condition (2) of non-differentiability in the same direction,
    we have $\delta(\bar{p}_i-\bar{p}_0)=0$. Using \Cref{lem:optimal_menu}, fix a menu $H$ for which the 
    posteriors $\{p_i\}_{i=1}^K$ are optimal. Notice that in this case, 
    if $\lambda,\lambda^\prime\in \Lambda_H$, then, since for 
    $\hat{\lambda}\in \{\lambda,\lambda^\prime\}$
    \[
    \hat{\lambda}\cdot (\bar{p}_i-\bar{p}_0)+V(\phi_H)=\bar{N}_H(\bar{p}_i)\quad \forall i,
    \]
    it follows that 
    $(\lambda-\lambda^\prime)\cdot (\bar{p}_i-\bar{p}_0)=0$.
    Therefore,
    \[
    D(\{\bar{p}_i\}_{i=1}^K)=\Lambda_H-\Lambda_H
    \]
    and so $D(\{\bar{p}_i\}_{i=1}^K)$ is compact if $\Lambda_H$ is.
    Then, since
    $\delta\cdot (\bar{p}_i-\bar{p}_0)=0$, $V(\phi_H)=\lambda\cdot (\bar{p}_0-\bar{p}_i)+\bar{N}_F(\bar{p}_i)= [\lambda+\delta]\cdot (\bar{p}_0-\bar{p}_i)+\bar{N}_F(\bar{p}_i)$,
    for all $i\in\{1,...,K\}$, and therefore $\lambda+\delta$ is also an optimal hyperplane. By way of contradiction, suppose 
    $\Lambda_H$ were not compact, then by \citet[Theorem 8.4]{Rockafellar+1970}, 
    there would exist $\lambda,\delta\neq 0$ such that, for all $t>0$, 
    $\lambda+t\delta\in \Lambda_H$. Furthermore, as 
    $D(\{\bar{p}_i\}_{i=1}^K)$ has dimension at least $1$, its orthogonal 
    complement does not have full rank \citep[p. 5]{Rockafellar+1970}. As a 
    result, for all $\epsilon>0$, there would exist $p$ such that 
    $\Vert \bar{p}-\bar{p}_0\Vert<\epsilon$, $\delta\cdot (\bar{p}-\bar{p}_0)< 0$, and 
    $\phi(p)-\psi(p)\leq V(\phi_H)+[\lambda+t\delta]\cdot (\bar{p}-\bar{p}_0),\,\forall t>0$, which can only happen if $\psi(p)=\infty$. Thus, there would be a sequence $\{\hat{p}_j\}_{j=1}^\infty\subset \dom(\psi)$ such that $\lim_{j\rightarrow \infty} \hat{p}_j=p_0$ and 
    $\psi(\hat{p}_j)=\infty,\forall j$, contradicting the assumption that 
    $p_0\in \ri(\dom(\psi))$ for $\psi$ canonical with 
    $\dim(\dom(\psi))=M$.

\subsection{Representations of affine functions via acts}

\begin{lemma}\label{lem:surjectivity}
    Let $D\subseteq \Delta(\Omega)$ be a convex set and $A:D\rightarrow \mathbb{R}$ be an affine function. There exists an act $h:\Omega\rightarrow X$ such that, for every $p\in D$, 
    \[
    \sum_{\omega}u(h(\omega))p(\omega)=A(p).
    \]
\end{lemma}
\begin{proof}
    $A$ can be extended to an affine function with domain $\aff (\Delta(\Omega))=\{x\in \mathbb{R}^\Omega|\sum_\omega x(\omega)=1\}$ (see Theorem 1.4 and page 7 of \cite{Rockafellar+1970}), which can then be extended to a linear function $L:\mathbb{R}^\Omega \rightarrow \mathbb{R}$, which can be uniquely represented as $L(p)=\sum_{\omega\in \Omega} l(\omega)p(\omega)$. Since the image of $u$ is $\mathbb{R}$, we can find an act $h$ such that $u(h(\omega))=l(\omega)$ for all $\omega\in \Omega$. Thus, we have, for every $p\in D$,
    \[
    \sum_{\omega}u(h(\omega))p(\omega)=\sum_{\omega}l(\omega)p(\omega)=L(p)=A(p).
    \]
\end{proof}

\end{document}